\newcommand{\mean}[1]{\langle{#1}\rangle}
\newcommand{\dgg}{^{\dagger}}
\newcommand{\Tr}{{\rm Tr}\hspace{0.07cm}}
\newtheorem{theorem}{Theorem}[section]
\newtheorem{lemma}{Lemma}[section]
\newtheorem{definition}{Definition}[section]
\newtheorem{corollary}{Corollary}[section]
\newtheorem{proposition}{Proposition}[section]
\begin{document}

%%%%%%%%%%%%%%%%%%%%%%%%%%%%%%%%%%%%%%%%%%
%%%%%%%%%%%%%%%%%%%%%%%%%%%%%%%%%%%%%%%%%%
%%%%%%%%%%%%%%%%%%%%%%%%%%%%%%%%%%%%%%%%%%

\title{
Decoherence-free linear quantum subsystems
}

%
%
% author names and IEEE memberships
% note positions of commas and nonbreaking spaces ( ~ ) LaTeX will not break
% a structure at a ~ so this keeps an author's name from being broken across
% two lines.
% use \thanks{} to gain access to the first footnote area
% a separate \thanks must be used for each paragraph as LaTeX2e's \thanks
% was not built to handle multiple paragraphs

\author{Naoki Yamamoto
\thanks{This work was supported in part by JSPS Grant-in-Aid 
No. 40513289. }
\thanks{N. Yamamoto is with the Department of 
        Applied Physics and Physico-Informatics, Keio University, 
        Hiyoshi 3-14-1, Kohoku, Yokohama 223-8522, Japan. 
        (e-mail: yamamoto@appi.keio.ac.jp). }
}
\maketitle

%\vspace{-1.5cm}

%{\small
%Department of Applied Physics and Physico-Informatics, 
%Keio University, Hiyoshi 3-14-1, Kohoku-ku, 
%Yokohama 223-8522, Japan. 
%Phone: +81-45-566-1830, 
%FAX: +81-45-566-1587, 
%Email: yamamoto@appi.keio.ac.jp
%}

%\vspace{0.5cm}

%%%%%%%%%%%%%%%%%%%%%%%%%%%%%%%%%%%%%%%%
%%%%%%%%%%%%%%%%%%%%%%%%%%%%%%%%%%%%%%%%
%%%%%%%%%%%%%%%%%%%%%%%%%%%%%%%%%%%%%%%%

\begin{abstract}

This paper provides a general theory for characterizing and 
constructing a decoherence-free (DF) subsystem for an infinite 
dimensional linear open quantum system. 
The main idea is that, based on the Heisenberg picture of the 
dynamics rather than the commonly-taken Schr\"{o}dinger picture, 
the notions of controllability and observability in control 
theory are employed to characterize a DF subsystem. 
A particularly useful result is a general if and only if 
condition for a linear system to have a DF component; 
this condition is used to demonstrate how to actually construct 
a DF dynamics in some specific examples. 
It is also shown that, as in the finite dimensional case, we 
are able to do coherent manipulation and preservation of a state 
of a DF subsystem. 

\end{abstract}

\begin{keywords}
Quantum linear systems, quantum information, decoherence-free subsystem, 
controllability and observability.
\end{keywords}

\section{Introduction}

Quantum information processing shows maximum performance when 
it is run based on an ideal closed system. 
However, in reality any quantum system is an open system 
interacting with surrounding environment and has to be affected 
by {\it decoherence}, which can seriously degrade the information. 
Therefore the theory of {\it decoherence-free subspaces} (DF subspaces) 
or more generally {\it decoherence-free subsystems} (DF subsystems) 
\cite{Kohout2008,Kohout2010,Knill2000,Lidar1998,Lidar2003,Ticozzi2008,
Ticozzi2009} is very promising for realizing various key 
technologies in quantum information science, such as 
quantum computation \cite{BaconPRL2000,KnightPRL2000,KempePRA2001,
Lidar1999,ZanardiPRL1997,Zanardi2003}, 
memory \cite{BaconPRA2006,Wineland2001,SangouardPRA2006}, 
communication \cite{TeichPRL2003,XuePLA2008}, and 
metrology \cite{DornerNJP2012}. 
It also should be mentioned that some experimental demonstrations 
have been reported \cite{Viola2003,ImotoPRL2011,Viola2001,Pan2006}.

Let us now review the basic idea of DF (or noiseless) subsystems. 
Quantum information processing is founded on the unitarity of 
the ideal closed dynamics of a system of interest \cite{NielsenBook}. 
More precisely, the ``state" $\hat\rho_{\rm cl}$ of a closed quantum 
system is driven by an Hermitian operator (Hamiltonian) $\hat H$ and 
is subjected to the dynamics 
$d\hat{\rho}_{\rm cl}/dt = -i[\hat H, \hat\rho_{\rm cl}] = 
-i(\hat H \hat\rho_{\rm cl} - \hat\rho_{\rm cl} \hat H)$, which yields 
the solution 
$\hat\rho_{\rm cl}(t)
={\rm e}^{-i\hat H t}\hat\rho_{\rm cl}(0){\rm e}^{i\hat H t}$ 
(a detailed description is given in Appendix); 
this unitary change of $\hat\rho_{\rm cl}(0)$ represents the quantum 
information processing under consideration, in which case it is often called 
the {\it gate operation} on $\hat\rho_{\rm cl}(0)$. 
However, in reality any quantum system interacts with environment 
and does not obey a unitary dynamics. 
Especially when the interaction is instantaneous and a Markovian 
approximation can be taken, the system's state evolves 
in time with the following {\it master equation}: 
\begin{equation}
\label{general master equation}
   \frac{d\hat \rho}{dt} 
    = - i[\hat H , \hat\rho] 
      + \sum_{i=1}^m \Big( 
          \hat L_i \hat\rho \hat L_i^* 
           -\frac{1}{2} \hat L_i^* \hat L_i \hat\rho 
            -\frac{1}{2} \hat\rho \hat L_i^* \hat L_i \Big).
\end{equation}
The operator $\hat L_i$ represents the coupling between the system and 
the $i$-th environment field; this second term represents the decoherence. 
Due to this, the time-evolution of $\hat\rho$ is not unitary, implying that 
in general coherent manipulation or preservation of an arbitrary state of 
an open system is impossible. 
Nevertheless, it is known in the {\it finite dimensional} case that we can 
have a chance to protect a specific part of the state against the environment. 
That is, if the system matrices $\hat H$ and $\hat L_k$ 
satisfy specific algebraic conditions and an appropriate initial state 
$\hat\rho(0)$ is prepared, then in certain basis vectors the system 
state has a block diagonal form 
$\hat\rho(t)
={\rm diag}\{\hat\rho_{\rm DF}(t)\otimes \hat\rho_{\rm D}(t), O\}$ 
for all $t\geq 0$, with $\otimes$ denoting the Kronecker product; 
$\hat\rho_{\rm DF}$ is the state of the DF subsystem, which obeys 
a unitary time evolution, while the decohered (D) component 
$\hat\rho_{\rm D}$ is still subjected to a non-unitary dynamics. 
That is, the dynamics of these states are respectively given by 
\begin{eqnarray}
& & \hspace*{-2em}
\label{DF dynamics in introduction}
    \frac{d\hat \rho_{\rm DF}}{dt} 
    = - i[\hat H_{\rm DF}, \hat\rho_{\rm DF}], 
\\ & & \hspace*{-2em}
\label{D dynamics in introduction}
    \frac{d\hat \rho_{\rm D}}{dt} 
    = - i[\hat H' , \hat\rho_{\rm D}] 
\nonumber \\ & & \hspace*{0em}
    \mbox{}
     + \sum_{i=1}^m \Big( 
      \hat L'_i \hat\rho_{\rm D} \hat L_i'\mbox{}^* 
       -\frac{1}{2} \hat L_i'\mbox{}^* \hat L'_i \hat\rho_{\rm D} 
        -\frac{1}{2} \hat\rho_{\rm D} \hat L_i'\mbox{}^* \hat L'_i \Big). 
\end{eqnarray}
When $\dim\hat \rho_{\rm D}=1$ (i.e. $\hat \rho_{\rm D}=1$), 
the space of $\hat \rho_{\rm DF}$ is called the DF subspace. 
Note that the complement ${\rm diag}\{O, \hat\rho_{{\rm D}'}(t)\}$ is also 
decohered and its dynamics is governed by a master equation. 
Equation~\eqref{DF dynamics in introduction} means that, by 
appropriately devising the reduced Hamiltonian $\hat H_{\rm DF}$, 
we can carry out a desirable unitary gate operation.

Now we come up with the following question; 
is it possible to construct a DF subsystem in the {\it infinite 
dimensional} case? 
This question is equivalent to asking the applicability of the 
above theory for discrete variable (DV) systems to continuous 
variable (CV) systems \cite{BraunsteinRMP,Furusawa2011}. 
Indeed, despite the recent rapid progress of quantum information 
science with infinite-dimensional/CV systems, there has not been 
developed a general theory of DF subsystems; 
it was only shown in \cite{Beny2009} that the infinite dimensional 
DF subsystems can be characterized in terms of operator algebras, 
which is though hard to use for engineering purpose. 
The problem becomes relatively easy if we focus on the case where 
there is no decohered subspace component, i.e. the case where 
the decomposition 
$\hat\rho(t) = \hat\rho_{\rm DF}(t)\otimes \hat\rho_{\rm D}(t)$ is 
possible. 
Actually in this case some infinite dimensional systems having a 
DF subsystem are known, in which case the DF component is often 
called the {\it dark mode}; 
the DF mode is utilized for state generation and preservation 
\cite{Huang2013,Zambrini2013,PrauznerJPA2004}, cooling of a mechanical 
oscillator \cite{Dong2012}, and state transfer \cite{Clerk2012}. 
However there is no general theory unifying the methods taken in 
these examples. 
Particularly what is lacking is the general procedure for constructing 
a DF subsystem, which has actually been found in the finite dimensional 
case \cite{Choi2006}.

This paper provides a basic framework dealing with a general 
{\it linear quantum system} containing a DF subsystem, when 
especially there is no decohered subspace component. 
A linear system is an infinite dimensional open system that 
describes the dynamics of, for instance, optical modes of a 
light field \cite{GardinerBook,IidaTAC2012,WallsBook,WisemanBook,
Yanagisawa-2003a,Yanagisawa-2003b}, position of 
an opto-mechanical oscillator \cite{LawPRA1995,YamamotoPRA2012}, 
vibration mode of a trapped particle \cite{WinelandRMP2003}, 
collective spin component of a large atomic ensemble 
\cite{PolzikRMP2010,PolzikNaturePhys2011,Muschik2011}, or more 
generally canonical conjugate pairs of a harmonic oscillator 
network \cite{BurgarthPRL2012,GoughPRA2008,GoughPRA2010,
JamesTAC2008,NurdinAutomatica2009,NurdinSIAM2009,
WisemanPRL2005,YamamotoPRA2006}, which all serve as quantum 
information devices.

The key ideas are twofold. 
The first main idea is that we focus on the dynamics of {\it finite} 
number of physical quantities (called ``observables" in physics) 
that completely characterize the system of interest, rather than 
the state that must be defined on the whole infinite dimensional 
space. 
This means that we represent the system in the {\it Heisenberg 
picture}, rather than the Schr\"{o}dinger picture describing the 
dynamics of a state. 
Then, the system dynamics is not described by the master equation 
\eqref{general master equation}, but by a {\it quantum stochastic 
differential equation (QSDE)} 
\cite{Belavkin1992,Bouten2007,GardinerBook,Hudson1984,WisemanBook}. 
The QSDE explicitly represents the system-environment coupling 
in a form that the equation contains the terms of environmental input 
and output fields. 
Together with the fact that we are now dealing with the finite 
number of observables, this critical feature of the QSDE further 
leads us to have the second main idea. 
That is, we are now able to naturally define a DF observable as 
follows; 
i.e., it is an observable such that its dynamics is not affected by the 
environmental input field and the behavior of that observable does not 
appear in the output field. 
Actually it will be proven in this paper that this definition is equivalent to 
the original one, for the finite dimensional case. 
The significant point of this definition is that a DF subsystem 
is now clearly characterized in terms of the notions of 
{\it controllability} and {\it observability}, which are well 
developed in control theory (e.g., \cite{AstromBook,KailathBook}). 
In particular, in the linear case, such controllability and 
observability properties can be straightforwardly captured by 
simple matrix algebras that only require us to compute the rank 
of certain matrices. 
As a result, based on the QSDE representation of a given linear 
open quantum system, we obtain a necessary and sufficient condition 
for that system to have a DF subsystem, which further leads to 
an explicit procedure for constructing the DF subsystem. 
This result will be given in Section~III.

Once a DF subsystem is constructed, then we should be interested 
in the {\it stability} of the dynamics. 
In the current context, this is the property that autonomously 
erases the correlation between the DF and the D parts, and it 
has been extensively investigated for finite dimensional systems 
\cite{SchirmerPRA2010,Ticozzi2008,Ticozzi2009}. 
In Section~IV, particularly in the case of Gaussian systems 
\cite{Ferraro2005,WeedbrookRMP2012}, a simple if and only if 
criterion that guarantees the stability of a DF subsystem will be 
provided.

The remaining part of the paper is devoted to the study 
of some concrete examples of a DF linear subsystem, which is 
divided into two topics as follows.

{\bf (i)} (Section V) 
The first topic deals with coherent manipulation and preservation 
of a state of a DF subsystem. 
More specifically, it is demonstrated that, in a specific two-mode 
linear system having a one-mode DF component, arbitrary linear 
unitary gate operation and state preservation are possible under 
relatively mild assumptions. 
Also, for the same system, it will be shown that preserving a state 
of the DF subsystem is equivalent to preserving a certain {\it entangled 
state} among the whole two-mode system. 
In addition, the stability property is examined, showing its importance 
in realizing robust state manipulation and preservation with the DF 
subsystem.

{\bf (ii)} (Section VI) 
The second is about how to engineer a DF subsystem for a given 
open linear system. 
We actually find in two specific examples that, by appropriately 
devising an auxiliary system coupled to the original one, it is 
possible to construct a physically meaningful DF subsystem. 
Therefore, in addition to the practical merit that a DF subsystem 
can be used for coherent quantum information processing, a DF 
subsystem is useful in the sense that it simulates an ideal closed 
quantum system even in a realistic open environment.

We use the following notations: 
for a matrix $A=(a_{ij})$, the symbols $A^\dagger$, $A^\top$, 
and $A^\sharp$ represent its Hermitian conjugate, transpose, and 
elementwise complex conjugate of $A$, i.e., 
$A^\dagger=(a_{ji}^*)$, $A^\top=(a_{ji})$, and 
$A^\sharp=(a_{ij}^*)=(A\dgg)^\top$, respectively. 
For a matrix of operators, $\hat A=(\hat a_{ij})$, we use the 
same notation, in which case $\hat a_{ij}^*$ denotes the adjoint 
to $\hat a_{ij}$. 
$I_n$ denotes the $n\times n$ identity matrix. 
$\Re$ and $\Im$ denote the real and imaginary parts, respectively. 
$O$ is a zero matrix with appropriate dimension. 
$\otimes$ is the tensor product, which is the Kronecker product 
in the finite dimensional case. 
${\rm Ker}(A)=\{x \hspace{0.05cm}|\hspace{0.05cm} Ax=0\}$ and 
${\rm Range}(A)=\{y \hspace{0.05cm}|\hspace{0.05cm} y=Ax,~\forall x\}$ 
denote the kernel and the range of a matrix $A$, respectively. 
Throughout the paper, we set $\hbar=1$.

Some fundamentals of quantum mechanics, e.g., states and 
observables, are given in Appendix.

%%%%%%%%%%%%%%%%%%%%%%%%%%%%%%%%%%%%%%%%
%%%%%%%%%%%%%%%%%%%%%%%%%%%%%%%%%%%%%%%%
%%%%%%%%%%%%%%%%%%%%%%%%%%%%%%%%%%%%%%%%

\section{Preliminaries}

\subsection{The QSDE}

Let us consider an open system interacting with some environment 
fields, which are assumed to be independent vacuum fields for 
simplicity. 
The time-evolution of an observable of this system is described 
by a QSDE, as shown below. 
For a comprehensive introduction to the theory of quantum 
stochastic calculus, we refer to \cite{Bouten2007}.

Let $\hat a_i(t)$ be the annihilation operator of the $i$-th 
vacuum field and assume that $\hat a_i(t)$ instantaneously 
interacts with the system, which means that it satisfies the 
{\it canonical commutation relation} (CCR) 
$[\hat a_i(s), \hat a_j^*(t)]=\delta_{ij}\delta(t-s)$. 
This relation reminds us the classical white noise; 
actually the field annihilation process 
$\hat A_i(t)=\int_0^t \hat a_i(s)ds$, which is the quantum version 
to the classical Wiener process, satisfies the following 
{\it quantum Ito rule} (in the vacuum field) 
\cite{Belavkin1992,Bouten2007,GardinerBook,Hudson1984,WisemanBook}: 
\[
    d\hat A_i d\hat A_j^*=\delta_{ij}dt,~~
    d\hat A_i d\hat A_j=d\hat A_i^* d\hat A_j^*
                       =d\hat A_i^* d\hat A_j=0, 
\]
where the adjoint operator $\hat A_i^*(t)$ represents the field creation 
process. 
The system-field interaction in the time interval $[t, t+dt)$ is 
described by the Hamiltonian 
$\hat H_{\rm int}(t+dt, t)
=i\sum_i(\hat L_i d\hat A_i^*(t) - \hat L_i^* d\hat A_i(t))$, 
where $\hat L_i$ is a system operator representing the coupling 
with the $i$-th vacuum field. 
A system observable $\hat X(0)$ evolves in the Heisenberg 
picture to $\hat X(t)=\hat U^*(t)\hat X(0) \hat U(t)$, where 
$\hat U(t)$ is the unitary operator from time $0$ to $t$, which 
is constructed from the relation $\hat U(t+dt)=\hat U(t+dt,t)\hat U(t)$ 
with $\hat U(t+dt, t)={\rm exp}[-i\hat H_{\rm int}(t+dt, t)]$. 
Then, using the above quantum Ito rule we can derive the QSDE of 
$\hat X(t)$: 
\begin{eqnarray}
& & \hspace*{-2.5em}
\label{QSDE}
    d\hat X(t)=\Big( i[\hat H(t), \hat X(t)] 
\nonumber \\ & & \hspace*{-1em}
    \mbox{}
       + \sum_{i=1}^m \big( 
         \hat L_i^*(t) \hat X(t) \hat L_i(t)
          -\frac{1}{2} \hat L_i^*(t) \hat L_i(t) \hat X(t)
\nonumber \\ & & \hspace*{8.7em}
    \mbox{}
           -\frac{1}{2} \hat X(t) \hat L_i^*(t) \hat L_i(t) \big) \Big)dt
\nonumber \\ & & \hspace*{-1em}
    \mbox{}
      + \sum_{i=1}^m \Big( 
          [\hat X(t), \hat L_i(t)]d\hat A_i^*(t)
           -  [\hat X(t), \hat L_i^*(t)]d\hat A_i(t) \Big). 
\end{eqnarray}
Here, we have added a system Hamiltonian $\hat H$ and defined 
$\hat H(t)=\hat U^*(t)\hat H\hat U(t)$ and 
$\hat L_i(t)=\hat U^*(t)\hat L_i\hat U(t)$. 
The master equation \eqref{general master equation} of the system's 
unconditional state, $\hat\rho(t)$, is obtained through the time evolution 
of the mean value of $\hat X(t)$ and the relation 
$\mean{\hat X(t)}=\Tr[\hat X(0)\hat \rho(t)]$. 
The change of the field operator is also obtained; 
the output field 
$\hat A^{\rm out}_i(t)=\hat U^*(t) \hat A_i(t) \hat U(t)$ 
after the interaction satisfies 
\begin{equation}
\label{general output}
    d\hat A^{\rm out}_i(t) = \hat L_i(t)dt + d\hat A_i(t). 
\end{equation}
%

%%%%%%%%%%%%%%%%%%%%%%%%%%%%%%%%%%%%%%%%%%%

\subsection{Quantum linear systems}

In this paper, we consider a system composed of $n$ subsystems. 
The variable of each subsystem is called the {\it mode}; 
particularly in our case the $i$-th mode is specified by the canonical 
conjugate pairs $(\hat q_i, \hat p_i)$ satisfying the CCR 
$[\hat q_i, \hat p_j]=i \delta _{ij}$. 
Note that both $\hat q_i$ and $\hat p_i$ are infinite dimensional 
operators, hence the system is of infinite dimensional. 
Let us here define the vector of operators 
$\hat x =(\hat q_1, \hat p_1, \ldots, \hat q_n, \hat p_n)^\top$. 
Then the CCRs $[\hat q_i, \hat p_j]=i \delta _{ij}$ are summarized as 
\begin{equation}
\label{CCR}
   \hat x \hat x ^\top -(\hat x \hat x^\top )^\top 
      = i \Sigma_n,
\end{equation}
where
\[
   \Sigma_n = {\rm diag}\{\Sigma, \ldots, \Sigma \},~~
   \Sigma = \left(\begin{array}{cc}
               0 & 1 \\
               -1 & 0
            \end{array}\right). 
\]
($\Sigma_n$ is a $2n\times 2n$ block diagonal matrix.) 
We are interested in a system whose Hamiltonian and coupling 
operator are respectively given by 
\[
     \hat H = \hat x^\top G \hat x/2,~~
     \hat L_i = c_i^\top \hat x, 
\]
where $G=G^\top \in {\mathbb R}^{2n \times 2n}$ and 
$c_i \in {\mathbb C}^{2n}~(i=1,\ldots,m)$ 
\cite{WisemanPRL2005,WisemanBook,YamamotoPRA2006}. 
Also let us define the canonical conjugate pairs of the noise process: 
\begin{equation}
\label{noise quadrature}
   \hat Q_i=(\hat A_i + \hat A_i^*)/\sqrt{2},~~~
   \hat P_i=(\hat A_i - \hat A_i^*)/\sqrt{2}i, 
\end{equation}
and collect them into a single vector of operators as 
$\hat {\cal W}=(\hat Q_1, \hat P_1, \dots, \hat Q_m, \hat P_m)^\top$. 
Then, from the QSDE \eqref{QSDE}, the vector of system variables 
$\hat x(t)=(\hat q_1(t), \hat p_1(t), \ldots, 
\hat q_n(t), \hat p_n(t) )^\top$, 
where $\hat q_i(t)=\hat U^*(t)\hat q_i\hat U(t)$ and 
$\hat p_i(t)=\hat U^*(t)\hat p_i\hat U(t)$, satisfies the following 
linear equation:
\begin{equation}
\label{linear QSDE}
   d\hat x(t) = A\hat x(t) dt 
       + \Sigma_n C^\top \Sigma_m d\hat {\cal W}(t).
\end{equation}
The coefficient matrices are given by 
\begin{eqnarray}
& & \hspace*{-2em}
      A=\Sigma_n(G+C^\top\Sigma_m C/2)
                   \in {\mathbb R}^{2n\times 2n},
\nonumber \\ & & \hspace*{-2em}
      C = \sqrt{2}(\Re(c_1), \Im(c_1), \ldots, \Re(c_m), \Im(c_m))^\top
       \in {\mathbb R}^{2m\times 2n}.
\nonumber
\end{eqnarray}
This specific structure of the system matrices is due to the 
unitary evolution of $\hat q_i(t)$ and $\hat p_i(t)$, which is 
indeed necessary to satisfy the CCR 
$[\hat q_i(t), \hat p_j(t)]=i \delta _{ij}$ for all $t$. 
Moreover, corresponding to the canonical conjugate representation of the 
input field \eqref{noise quadrature}, let us define the output 
process as 
\[
   \hat Q_i^{\rm out}
     =(\hat A_i^{\rm out} + \hat A_i^{\rm out}\mbox{}^*)/\sqrt{2},~~
   \hat P_i^{\rm out}
     =(\hat A_i^{\rm out} - \hat A_i^{\rm out}\mbox{}^*)/\sqrt{2}i, 
\]
and collect them as 
$\hat{\cal W}^{\rm out}
=(\hat Q_1^{\rm out}, \hat P_1^{\rm out},\ldots,
\hat Q_m^{\rm out}, \hat P_m^{\rm out})^\top$. 
Then, from \eqref{general output} we have 
\begin{equation}
\label{linear output}
   d\hat {\cal W}^{\rm out}(t) 
     = C\hat x(t) dt + d\hat {\cal W}(t). 
\end{equation}
To denote the system variable, in what follows we often omit the time 
index $t$ and simply use $\hat x$; 
to avoid confusion, the initial value is explicitly written as 
$\hat x(0)$.

Here we remark that the linearly transformed variable 
$\hat x'=T^\top \hat x$ must satisfy the CCR \eqref{CCR}, which 
implies that 
$x'x'\mbox{}^\top-(x'x'\mbox{}^\top)^\top
=T^\top(xx^\top-(xx^\top)^\top)T=iT^\top\Sigma_n T=i\Sigma_n$. 
Consequently, in the quantum case, the similarity transformation $T$ 
must be {\it symplectic}, meaning that it satisfies 
\begin{equation}
\label{symplectic}
    T^\top \Sigma_n T = \Sigma_n. 
\end{equation}
Note further that, in the linear case, the time evolution is also 
a symplectic transformation, because we can find a symplectic 
matrix $S$ satisfying $\hat x(t)=S^\top \hat x(0)$.

%Note that all the $2m$ elements of the vector 
%$\hat{\cal W}^{\rm out}$ cannot be measured simultaneously; 
%the output equation corresponding to a linear measurement 
%(e.g., Homodyne detection) is of the form 
%
%\[
%   dY(t) = Md\hat{\cal W}^{\rm out}(t) 
%        = MC\hat x(t) dt + M d\hat{\cal W}(t),
%\]
%
%where $M$ is a $m\times 2m$ real matrix satisfying 
%$M \Sigma_m M^\top=O$, guaranteeing that all the elements of $Y(t)$ 
%commute with each other as well as with those of $Y(s)$ at different 
%time $s$. 

Now let $\mean{\hat x}$ be the mean vector, where the mean operation 
$\mean{\cdot}$ is taken elementwise. 
The covariance matrix is defined by 
\[
    V=\mean{\Delta \hat x \Delta \hat x^\top 
              + (\Delta \hat x \Delta \hat x^\top)^\top}/2,~~
    \Delta \hat x=\hat x-\mean{\hat x}.
\]
Note that the uncertainty relation $V+i\Sigma_n/2 \geq 0$ holds. 
Then for the linear system \eqref{linear QSDE}, the time-evolutions of 
$\mean{\hat x}$ and $V$ are respectively given by 
\begin{equation}
\label{mean and variance dynamics}
    d\mean{\hat x}/dt = A \mean{\hat x},~~~
    dV/dt = AV +V A^\top + D,
\end{equation}
where $D=\Sigma_n^\top C^\top C\Sigma_n/2$.

%%%%%%%%%%%%%%%%%%%%%%%%%%%%%%%%%%%%%%%%%%%%%

\subsection{Gaussian systems}
\label{Preliminary: Gaussian states}

A quantum Gaussian state can be characterized by only the mean 
vector $\mean{\hat x}$ and the covariance matrix $V$ 
\cite{Ferraro2005,WeedbrookRMP2012}. 
More specifically, its Wigner function is identical to the Gaussian 
probabilistic distribution with mean $\mean{\hat x}$ and 
covariance $V$. 
A notable property of a linear system is that it preserves the 
Gaussianity of the state; 
that is, if the initial state of the system \eqref{linear QSDE} is 
Gaussian, then, for all later time $t$ the state is Gaussian with 
mean $\mean{\hat x(t)}$ and covariance $V(t)$, whose time evolutions 
are given by the linear differential equations 
\eqref{mean and variance dynamics}. 
A unique steady state of this Gaussian system exists only when $A$ 
is a {\it Hurwitz} matrix, i.e., all the eigenvalues of $A$ have 
negative real parts. 
If it exists, the mean vector is $\mean{\hat x(\infty)}=0$, and 
the covariance matrix $V(\infty)$ is given by the unique 
solution to the algebraic Lyapunov equation 
$AV(\infty) + V(\infty) A^\top + D = O$. 
Note that several non-Gaussian states can be generated in a linear quantum 
system \cite{GuofengTAC2012}.

%%%%%%%%%%%%%%%%%%%%%%%%%%%%%%%%%%%%%%%%%%%%%

\subsection{Controllability and observability}
\label{controllability and observability}

Here we review the notions of controllability and observability, 
which play the fundamental roles in characterizing various important 
properties of a system, such as stabilizability of the system via 
control input or capability of constructing an estimator that 
continuously monitors the system variables. 
See e.g. \cite{AstromBook,KailathBook} for full description 
of those theories.

Let us consider the following general linear system: 
\begin{equation}
\label{linear classical system}
    \dot{x}=Ax+Bu,~~~~y=Cx,
\end{equation}
where $x$ is the system variable of dimension $n$, and $u$ and $y$ 
are an input and an output with certain dimensions, respectively. 
Note that $u$ represents any input such as a disturbing noise and 
a tunable control signal. 
Now define the {\it controllability matrix} 
${\cal C}=(B, AB, \ldots, A^{n-1}B)$ and assume that 
${\rm rank}({\cal C})=n'$. 
(Note $n'\leq n$ in general.) 
Then, there exists a linear coordinate transformation 
$x\rightarrow x'=(x_1'\mbox{}^\top, x_2'\mbox{}^\top)^\top$ such 
that the system \eqref{linear classical system} is transformed to
\begin{eqnarray}
& & \hspace*{-2em}
     \frac{d}{dt}
      \left( \begin{array}{c} 
                 x_1' \\
                 x_2' \\
               \end{array}\right)
     =\left( \begin{array}{cc} 
                 A_{11}' & O \\
                 A_{21}' & A_{22}' \\
               \end{array}\right)
       \left( \begin{array}{c} 
                 x_1' \\
                 x_2' \\
               \end{array}\right)
        +\left( \begin{array}{c} 
                 O \\
                 B_2' \\
               \end{array}\right)u,
\nonumber \\ & & \hspace*{2.2em}
      y=(C_1',~C_2')
       \left( \begin{array}{c} 
                 x_1' \\
                 x_2' \\
               \end{array}\right), 
\nonumber
\end{eqnarray}
where $x_1'$ is of dimension $n-n'$. 
This equation shows that $x_1'$ is not affected by $u$, 
hence it is called {\it uncontrollable} with respect to (w.r.t.) $u$. 
Of course, if $n'=n$, or equivalently if ${\cal C}$ is of full 
rank, then all the elements of $x$ is affected by $u$ and in this case 
the system is called controllable. 
Next, let us define the {\it observability matrix} 
${\cal O}=(C^\top, A^\top C^\top, \ldots, (A^\top)^{n-1}C^\top)^\top$ 
and assume that ${\rm rank}({\cal O})=n''$. 
Then, there exists a linear coordinate transformation that transforms 
\eqref{linear classical system} to the system of the following form: 
\begin{eqnarray}
& & \hspace*{-2em}
     \frac{d}{dt}
      \left( \begin{array}{c} 
                 x_1'' \\
                 x_2'' \\
               \end{array}\right)
     =\left( \begin{array}{cc} 
                 A_{11}'' & A_{12}'' \\
                 O & A_{22}'' \\
               \end{array}\right)
       \left( \begin{array}{c} 
                 x_1'' \\
                 x_2'' \\
               \end{array}\right)
        +\left( \begin{array}{c} 
                 B_1'' \\
                 B_2'' \\
               \end{array}\right)u,
\nonumber \\ & & \hspace*{2.3em}
      y=(O,~C_2'')
       \left( \begin{array}{c} 
                 x_1'' \\
                 x_2'' \\
               \end{array}\right), 
\nonumber
\end{eqnarray}
where $x_1''$ is of dimension $n-n''$. 
Clearly, $x_1''$ cannot be seen from the output $y$, hence $x_1''$ 
is called {\it unobservable} w.r.t. $y$. 
Note here that the term ``unobservable" does not mean that $x_1''$ 
is not a physical quantity.

%%%%%%%%%%%%%%%%%%%%%%%%%%%%%%%%%%%%%%%%
%%%%%%%%%%%%%%%%%%%%%%%%%%%%%%%%%%%%%%%%
%%%%%%%%%%%%%%%%%%%%%%%%%%%%%%%%%%%%%%%%

\section{Characterization of linear DF subsystems}

In this section, based on the QSDE representation of the system, we 
show the if and only if condition for a general linear quantum system 
to have a DF subsystem, particularly in the case where there is no 
decohered subspace component. 
Moreover, we will see that this condition leads to a concrete procedure for 
constructing the DF subsystem.

Here we remark that in \cite{Beny2007a,Beny2007b} the Heisenberg 
picture is taken to investigate the quantum error correction, in the 
general finite dimensional case. 
This approach differs from ours in that the authors have examined 
the {\it mean} dynamics of the system observables with its 
environment fields traced out (averaged); 
that is, unlike the QSDE, the time-evolution equation does not contain 
the input and output terms, so the notions of controllability and 
observability are not applicable.

%%%%%%%%%%%%%%%%%%%%%%%%%%%%%%%%%%%%%%%%%%%%

\subsection{The Heisenberg picture description of a DF subsystem}

In this subsection, we describe a finite dimensional open system 
having a DF subsystem, in the Heisenberg picture. 
As mentioned in Section~I, if the system contains a DF subsystem, 
then its state, which is subjected to the master equation 
\eqref{general master equation}, is represented in a specific 
basis as 
$\hat\rho(t)
={\rm diag}\{\hat\rho_{\rm DF}(t)\otimes \hat\rho_{\rm D}(t), O\}$ with 
$\hat \rho_{\rm DF}$ governed by the unitary dynamics 
\eqref{DF dynamics in introduction}. 
Ticozzi and Viola \cite{Ticozzi2008,Ticozzi2009} gave an explicit iff condition 
for the system to have a DF subsystem; 
in particular when there is no decohered subspace, the condition is that 
the system matrices are of the following form: 
\begin{equation}
\label{Ticozzi condition}
    \hat H = \hat H_{\rm DF}\otimes I_{\rm D} 
                + I_{\rm DF}\otimes \hat H_{\rm D},~~~
    \hat L_i = I_{\rm DF}\otimes \hat L_{{\rm D},i}. 
\end{equation}
Actually, by substituting these matrices with initial state 
$\hat\rho(0)=\hat\rho_{\rm DF}(0)\otimes \hat\rho_{\rm D}(0)$ for the 
master equation \eqref{general master equation}, we obtain 
\eqref{DF dynamics in introduction} and \eqref{D dynamics in introduction}.

Let us now change the picture to the Heisenberg's one and look 
at the dynamics of an open system, the QSDE \eqref{QSDE}. 
Now we assume that the system matrices are given by \eqref{Ticozzi condition}; 
then the system observables 
$\hat X_{\rm DF}(t)
=\hat U^*(t)(\hat X_{\rm DF}(0)\otimes I_{\rm D}) \hat U(t)$ and 
$\hat X_{\rm D}(t)
=\hat U^*(t)(I_{\rm DF}\otimes\hat X_{\rm D}(0)) \hat U(t)$ satisfy 
the following dynamical equations:
\begin{eqnarray}
& & \hspace*{-2em}
\label{QSDE - DF}
    d\hat X_{\rm DF}=i[\hat H_{\rm DF}, \hat X_{\rm DF}]dt, 
\\ & & \hspace*{-2em}
\label{QSDE - D}
    d\hat X_{\rm D}=\Big( i[\hat H_{\rm D}, \hat X_{\rm D}] 
\nonumber \\ & & \hspace*{1em}
    \mbox{}
     + \sum_{i=1}^m \big( 
     \hat L_{{\rm D},i}^* \hat X_{\rm D} \hat L_{{\rm D},i}
     -\frac{1}{2} \hat L_{{\rm D},i}^* \hat L_{{\rm D},i} \hat X_{\rm D}
\nonumber \\ & & \hspace*{9.4em}
    \mbox{}
     -\frac{1}{2} \hat X_{\rm D} \hat L_{{\rm D},i}^* \hat L_{{\rm D},i} 
       \big) \Big)dt
\nonumber \\ & & \hspace*{1em}
    \mbox{}
      + \sum_{i=1}^m \Big( 
          [\hat X_{\rm D}, \hat L_{{\rm D},i}]d\hat A_i^*
           -  [\hat X_{\rm D}, \hat L_{{\rm D},i}^*]d\hat A_i \Big),
\end{eqnarray}
where we have omitted the time index $(t)$. 
Also, the output equation \eqref{general output} becomes 
\begin{equation}
\label{general output DF D}
    d\hat A^{\rm out}_i(t) = \hat L_{{\rm D},i}(t)dt + d\hat A_i(t). 
\end{equation}
Note that $\hat L_{{\rm D},i}(t)$ evolves in time according to \eqref{QSDE - D}. 
Thus, the above equations imply that $\hat X_{\rm DF}(t)$ is completely isolated 
from the environment. 
In other words, $\hat X_{\rm DF}(t)$ is not affected by the input field 
$\hat{A}_i(t)$, and the output field $\hat A^{\rm out}_i(t)$ does not 
contain any information about that observable for all $i$. 
In terms of control theory, therefore, an observable that is decoherence 
free is uncontrollable w.r.t. $\hat{A}_i(t)$ and unobservable w.r.t. 
$\hat{A}_i^{\rm out}(t)$ for all $i$.

%%%%%%%%%%%%%%%%%%%%%%%%%%%%%%%%%%%%%%%%%%%%

\subsection{The DF condition for linear systems}

In the above subsection we have seen that a DF observable is 
uncontrollable and unobservable. 
In the finite dimensional case, this is indeed true for {\it any} 
DF observable, because \eqref{Ticozzi condition} is necessary and 
sufficient. 
This characterization is reasonable from both physics and engineering 
viewpoints, and thus, it may be taken as a definition for a system 
to have a DF subsystem even in the infinite dimensional case. 
In particular, for the linear quantum system \eqref{linear QSDE} and 
\eqref{linear output}, we pose the following formal statement:

\begin{definition}
\label{Def of DF mode}
For the linear quantum system \eqref{linear QSDE} and \eqref{linear output}, 
suppose that (i) there exists a subsystem that is uncontrollable w.r.t. $\hat{\cal W}$ 
and unobservable w.r.t. $\hat{\cal W}^{\rm out}$, and further, (ii) its variable 
satisfies the CCR \eqref{CCR}. 
Then, that subsystem is called the DF subsystem, and its variable $\hat x_{\rm DF}$ 
is called the DF mode. 
\end{definition}

The requirement for $\hat x_{\rm DF}$ to satisfy the CCR property stems 
from the fact that, in CV quantum information processing, any gate 
operation acts on the pairs of canonical conjugate observables, 
and this discerns our problem from a simple one where we are interested 
in a merely uncontrollable and unobservable subsystem. 
A different view of this requirement is that it corresponds to the 
{\it physical realizability} condition \cite{JamesTAC2008} of the DF 
subsystem; 
indeed, if the subsystem is completely isolated, it must be a physical system. 
See \cite{Petersen2013} where, in a specific optical system, an uncontrollable 
and unobservable subsystem is constructed using the physical realizability 
condition.

The purpose here is to fully characterize the linear system 
having a DF subsystem. 
As in the classical case, the following controllability matrix 
${\cal C}$ and the observability matrix ${\cal O}$ will play a 
key role in deriving such a characterization; 
\begin{eqnarray*}
& & \hspace*{-1em}
%\label{controllability matrix}
    {\cal C}=(\Sigma_n C^\top\Sigma_m, 
              A\Sigma_n C^\top\Sigma_m, \ldots, 
              A^{2n-1}\Sigma_n C^\top\Sigma_m), 
\\ & & \hspace*{-1em}
%\label{observability matrix}
    {\cal O}=(C^\top, A^\top C^\top, \ldots, 
              (A^\top)^{2n-1}C^\top)^\top. 
\end{eqnarray*}
The following lemma will be useful:
\begin{lemma}
\label{DF lemma}
\[
     \Sigma_n v\in {\rm Ker}({\cal O})~~\Leftrightarrow~~
     v\in {\rm Ker}({\cal C}^\top)
\]
\end{lemma}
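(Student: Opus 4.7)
The strategy is to recast both kernels as maximal invariant subspaces and then connect them through a single algebraic identity coming from the Hamiltonian-plus-coupling structure of $A$.

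First, I would unpack the two conditions. Since $\Sigma_m$ is invertible and $\Sigma_n^\top=-\Sigma_n$, one checks that $v\in{\rm Ker}({\cal C}^\top)$ is equivalent to $C\Sigma_n(A^\top)^kv=0$ for all $k=0,\dots,2n-1$, whereas $\Sigma_nv\in{\rm Ker}({\cal O})$ simply means $CA^k\Sigma_nv=0$ for the same range of $k$. By Cayley--Hamilton, ${\cal N}:={\rm Ker}({\cal O})$ is the largest $A$-invariant subspace of ${\rm Ker}(C)$, while ${\cal K}:={\rm Ker}({\cal C}^\top)$ is the largest $A^\top$-invariant subspace of ${\rm Ker}(C\Sigma_n)$. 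The lemma then reads $\Sigma_n{\cal K}={\cal N}$.

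The key tool is derived from $A=\Sigma_n(G+C^\top\Sigma_mC/2)$ using $G=G^\top$, $\Sigma_m^\top=-\Sigma_m$, $\Sigma_n^\top=-\Sigma_n$, and $\Sigma_n^2=-I$: a direct computation yields the two identities
\[
    A\Sigma_n+\Sigma_n A^\top=\Sigma_n C^\top\Sigma_m C\Sigma_n, \qquad
    A^\top\Sigma_n+\Sigma_n A=-C^\top\Sigma_m C.
\]
Their virtue is that the right-hand sides vanish on ${\rm Ker}(C\Sigma_n)$ and ${\rm Ker}(C)$ respectively. So for $v\in{\cal K}$, the first identity collapses to $A\Sigma_nv=-\Sigma_nA^\top v\in\Sigma_n{\cal K}$ (by $A^\top$-invariance of ${\cal K}$), and together with $\Sigma_nv\in{\rm Ker}(C)$ this shows that $\Sigma_n{\cal K}$ is an $A$-invariant subspace of ${\rm Ker}(C)$, so maximality of ${\cal N}$ gives $\Sigma_n{\cal K}\subseteq{\cal N}$. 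Symmetrically, for $v\in{\cal N}$ the second identity gives $A^\top\Sigma_nv=-\Sigma_n Av\in\Sigma_n{\cal N}$, while $C\Sigma_n(\Sigma_nv)=-Cv=0$, so $\Sigma_n{\cal N}\subseteq{\cal K}$; applying $\Sigma_n$ once more and using $\Sigma_n^2=-I$ converts this into ${\cal N}\subseteq\Sigma_n{\cal K}$.

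The only real obstacle is spotting the right identity: one would naively wish for $A\Sigma_n=\Sigma_n A^\top$, but that relation fails by $2\Sigma_n G\Sigma_n$. Forming the symmetric combination $A\Sigma_n+\Sigma_n A^\top$ is precisely what makes the Hamiltonian $G$ drop out (by $G=G^\top$), leaving only a coupling-induced defect whose image lies in ${\rm Range}(\Sigma_n C^\top)$ and hence vanishes on exactly the subspaces that govern the two kernels.
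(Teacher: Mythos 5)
Your proof is correct, and it takes a genuinely different route from the paper's. The paper argues power-by-power: exploiting $A=\Sigma_n(G+C^\top\Sigma_m C/2)$, it first shows (by an implicit induction) that $CA^k\Sigma_n v=0$ for all $k\geq 0$ is equivalent to the Hamiltonian-only condition $C(\Sigma_n G)^k\Sigma_n v=0$ for all $k\geq 0$, and then evaluates
\[
 (A^k\Sigma_n C^\top\Sigma_m)^\top v
  =\Sigma_m C\Sigma_n\bigl(G\Sigma_n^\top+C^\top\Sigma_m C\Sigma_n/2\bigr)^k v
  =\Sigma_m C\Sigma_n(G\Sigma_n^\top)^k v=0
\]
directly, leaving the converse as ``readily obtained.'' You instead work geometrically: you characterize ${\rm Ker}({\cal O})$ and ${\rm Ker}({\cal C}^\top)$ as the maximal $A$- (resp.\ $A^\top$-) invariant subspaces of ${\rm Ker}(C)$ (resp.\ ${\rm Ker}(C\Sigma_n)$), and let the two symmetrized identities $A\Sigma_n+\Sigma_n A^\top=\Sigma_n C^\top\Sigma_m C\Sigma_n$ and $A^\top\Sigma_n+\Sigma_n A=-C^\top\Sigma_m C$ do the work, with maximality replacing the induction on powers of $A$; I verified both identities and each step of the two inclusions, and the argument is sound. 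Notably, your first identity is exactly the physical-realizability (CCR-preservation) condition $A\Sigma_n+\Sigma_n A^\top+B\Sigma_m B^\top=O$ with $B=\Sigma_n C^\top\Sigma_m$ \cite{JamesTAC2008}, so your proof exposes the structural mechanism behind Lemma~\ref{DF lemma}: the Hamiltonian $G$ cancels from the symmetrized combination, and the remaining defect is supported where both kernels live, making the lemma a symplectic duality between the unobservable subspaces of the pairs $(C,A)$ and $(C\Sigma_n,A^\top)$. Your version is also symmetric in the two directions rather than deferring one of them. What the paper's computational reduction buys, by contrast, is the explicit intermediate condition $C(\Sigma_n G)^k\Sigma_n v=0$, which is reused almost verbatim in the proof of Proposition~\ref{DF mode prop} on Hamiltonian engineering, whereas your invariant-subspace formulation would require a small additional argument there.
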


\begin{proof} 
Suppose $\Sigma_n v\in {\rm Ker}({\cal O})$. 
This means that we have $CA^k \Sigma_n v=0,~\forall k\geq 0$, which 
is further equivalent to $C(\Sigma_n G)^k \Sigma_n v=0,~\forall k\geq 0$, 
due to the specific structure of $A$, i.e. 
$A=\Sigma_n(G+C^\top\Sigma_m C/2)$. 
This condition yields 
\begin{eqnarray*}
& & \hspace*{-1em}
    (A^k\Sigma_nC^\top\Sigma_m)^\top v
      =\Sigma_m C\Sigma_n(G\Sigma_n^\top
                  +C^\top\Sigma_m C\Sigma_n/2)^k v
\\ & & \hspace*{6.7em}
      =\Sigma_m C\Sigma_n(G\Sigma_n^\top)^k v=0,
\end{eqnarray*}
for all $k\geq 0$, hence $v\in {\rm Ker}({\cal C}^\top)$. 
The reverse direction is readily obtained. 
\end{proof}

This strong relationship between the controllability and observability 
properties follows from the specific structure of the system matrices 
mentioned in Section II-B. 
Interestingly, this relationship can be connected to the CCR as follows. 
Let us define two observables $\hat r_1=v^\top\Sigma_n^\top \hat x$ and 
$\hat r_2=v^\top \hat x$ with $v$ a normalized vector in 
${\rm Ker}({\cal C}^\top)$. 
Then from \eqref{CCR} we find 
\begin{eqnarray*}
& & \hspace*{-1em}
    [\hat r_1,~\hat r_2]
     = \hat r_1 \hat r_2 - \hat r_2 \hat r_1
     = \hat r_1 \hat r_2 - (\hat r_2 \hat r_1)^\top
\\ & & \hspace*{2.4em}
     = v^\top\Sigma_n^\top 
         [ \hat x \hat x^\top - (\hat x \hat x^\top)^\top ] v
     = v^\top\Sigma_n^\top (i\Sigma_n)v
     = i. 
\end{eqnarray*}
This means that, if an observable is free from the input field, 
then its canonical conjugate must not appear in the output field. 
Thus, in the linear case, an uncontrollable physical quantity and 
an unobservable physical quantity are in the relationship of 
canonical conjugation.

The following theorem is our first main result:

\begin{theorem}
\label{DF mode theorem}
The linear system \eqref{linear QSDE} and \eqref{linear output} 
has a DF subsystem if and only if 
\begin{equation}
\label{DF theorem condition}
    {\rm Ker}({\cal O}) \cap {\rm Ker}({\cal O}\Sigma_n) 
       \neq \emptyset.
\end{equation}
Then there always exists a matrix $T_1$ satisfying 
${\rm Range}(T_1)={\rm Ker}({\cal O}) \cap {\rm Ker}({\cal O}\Sigma_n)$ 
and $T_1^\top\Sigma_n T_1=\Sigma_\ell$, and then the DF mode is given by 
$\hat x_{\rm DF}=T_1^\top \hat x$. 
\end{theorem}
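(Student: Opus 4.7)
The plan is to translate the three requirements of Definition~3.1 for a DF subsystem --- unobservability, uncontrollability, and the CCR --- into linear-algebraic conditions on the coordinate matrix $T_1$ defining $\hat x_{\rm DF} = T_1^\top \hat x$, and then exploit Lemma~3.1 to merge the first two conditions into a single kernel inclusion.

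For necessity, I would first note that, reading directly from \eqref{linear QSDE} and \eqref{linear output}, the standard Kalman rank tests imply that $\hat x_{\rm DF}$ is unobservable w.r.t. $\hat{\cal W}^{\rm out}$ iff ${\rm Range}(T_1)\subseteq{\rm Ker}({\cal O})$, while it is uncontrollable w.r.t. $\hat{\cal W}$ iff ${\rm Range}(T_1)\subseteq{\rm Ker}({\cal C}^\top)$. Lemma~3.1 rewrites the latter as ${\rm Range}(T_1)\subseteq{\rm Ker}({\cal O}\Sigma_n)$, so both inclusions collapse into ${\rm Range}(T_1)\subseteq V$, where $V:={\rm Ker}({\cal O})\cap{\rm Ker}({\cal O}\Sigma_n)$. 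Since the CCR $T_1^\top\Sigma_n T_1=\Sigma_\ell$ forces $T_1$ to have positive column rank, $V$ must be non-trivial, which is exactly \eqref{DF theorem condition}.

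For sufficiency and the explicit construction, assume $V\neq\{0\}$. My plan has three steps: (a) show that $V$ is $\Sigma_n$-invariant; (b) deduce that $\Sigma_n$ induces a non-degenerate symplectic form on $V$; and (c) extract a symplectic basis of $V$ to form $T_1$. Step (a) is direct from $\Sigma_n^2=-I_{2n}$: for any $v\in V$, we have ${\cal O}\Sigma_n v=0$ (so $\Sigma_n v\in{\rm Ker}({\cal O})$) and ${\cal O}\Sigma_n^2 v=-{\cal O}v=0$ (so $\Sigma_n v\in{\rm Ker}({\cal O}\Sigma_n)$). For (b), given any non-zero $v\in V$, the vector $w=-\Sigma_n v$ lies in $V$ by (a) and satisfies $v^\top\Sigma_n w=-v^\top\Sigma_n^2 v=\|v\|^2>0$, so the form is non-degenerate. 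Step (c) is then a textbook symplectic Gram-Schmidt: $V$ has even dimension $2\ell$ and admits a basis $\{e_1,f_1,\ldots,e_\ell,f_\ell\}$ with $e_i^\top\Sigma_n f_j=\delta_{ij}$ and $e_i^\top\Sigma_n e_j=f_i^\top\Sigma_n f_j=0$. Setting $T_1=(e_1,f_1,\ldots,e_\ell,f_\ell)$ yields ${\rm Range}(T_1)=V$ and $T_1^\top\Sigma_n T_1=\Sigma_\ell$ simultaneously, and the resulting $\hat x_{\rm DF}=T_1^\top\hat x$ is CCR-compatible, uncontrollable, and unobservable.

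The only conceptually delicate point is step (b): a priori one might worry that $V$ is isotropic for $\Sigma_n$, in which case no CCR-compatible basis would exist and the intersection would describe merely an uncontrollable-and-unobservable linear subspace rather than a genuine physical DF mode. The identity $\Sigma_n^2=-I_{2n}$ combined with the $\Sigma_n$-invariance from (a) is precisely what rules out isotropic directions, and this is where the symplectic (quantum) structure enters the argument essentially. Everything else reduces to standard control-theoretic kernel calculations and the Darboux normal form.
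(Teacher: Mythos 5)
Your proposal is correct and follows essentially the same route as the paper: Lemma~3.1 is used to rewrite uncontrollability as ${\rm Range}(T_1)\subseteq{\rm Ker}({\cal O}\Sigma_n)$, the intersection ${\rm Ker}({\cal O})\cap{\rm Ker}({\cal O}\Sigma_n)$ is shown to be $\Sigma_n$-invariant via $\Sigma_n^2=-I_{2n}$, and a symplectic basis of it is extracted to form $T_1$. The only (cosmetic) divergence is the final step: the paper constructs $T_1$ explicitly from orthonormal pairs $(v_i,\Sigma_n v_i)$, which additionally yields $T_1^\top T_1=I_{2\ell}$ (used afterwards to build the orthogonal $T=(T_1,T_2)$), whereas you establish non-degeneracy of the restricted form and invoke symplectic Gram--Schmidt --- this proves the theorem as stated and makes explicit the non-isotropy point that the paper's construction handles implicitly.
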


\begin{proof} 
First recall that the condition (i) in Def.~\ref{Def of DF mode} means 
${\rm Range}({\cal C})^{\rm c}\cap {\rm Ker}({\cal O}) \neq \emptyset$, 
where the subscript ${\rm c}$ denotes the complement of the set. 
This condition can be equivalently represented by 
${\rm Ker}({\cal C}^\top) \cap {\rm Ker}({\cal O}) \neq \emptyset$, 
because the controllability and obervability properties are invariant 
under the similarity transformation of the basis vectors. 
Now, Lemma~\ref{DF lemma} states that {\it any} vector satisfying 
${\cal O}\Sigma_n v=0$ fulfills ${\cal C}^\top v=0$, and vice versa. 
This means ${\rm Ker}({\cal O}\Sigma_n) = {\rm Ker}({\cal C}^\top)$, 
hence, the condition (i) in Def.~\ref{Def of DF mode} is equivalent to 
\eqref{DF theorem condition}.

Next we prove, by a constructive manner, that there always exists 
a matrix $T_1$ satisfying 
${\rm Range}(T_1)={\rm Ker}({\cal O}) \cap {\rm Ker}({\cal O}\Sigma_n)$ 
and $T_1^\top\Sigma_n T_1=\Sigma_\ell$. 
First, let us take a vector 
$v_1\in{\rm Ker}({\cal O}) \cap {\rm Ker}({\cal O}\Sigma_n)$, 
which readily implies that $\Sigma_n v_1$ is also contained in 
this space. 
Note that $v_1$ and $\Sigma_n v_1$ are orthogonal. 
Next we take a vector 
$v_2\in{\rm Ker}({\cal O}) \cap {\rm Ker}({\cal O}\Sigma_n)$ 
that is orthogonal to both $v_1$ and $\Sigma_n v_1$. 
Then, as before, $\Sigma_n v_2$ is also an element of this space, 
and further, it is orthogonal to $v_1$, $\Sigma_n v_1$, and $v_2$. 
Repeating the same procedure, we construct 
$T_1=(v_1, \Sigma_n v_1, \ldots, v_\ell, \Sigma_n v_\ell)$; 
this matrix satisfies $T_1^\top T_1=I_{2\ell}$ and 
$\Sigma_n T_1=T_1 \Sigma_\ell$, which thus yield 
$T_1^\top\Sigma_n T_1=\Sigma_\ell$. 
Hence $\hat x_{\rm DF}=T_1^\top\hat x$ satisfies the CCR \eqref{CCR}, 
implying that \eqref{DF theorem condition} leads to the fulfillment of the 
condition (ii) in Def.~\ref{Def of DF mode}.

Consequently, \eqref{DF theorem condition} is an iff condition for the 
system to have a DF subsystem, and $\hat x_{\rm DF}=T_1^\top\hat x$ 
is the DF mode. 
\end{proof}

As proven above, \eqref{DF theorem condition} is not merely 
the equivalent conversion of the condition (i) in Def.~\ref{Def of DF mode}; 
it clarifies the fact that, if there exists an uncontrollable and 
unobservable subspace, it must contain the pair of vectors 
$v$ and $\Sigma_n v$, which directly leads to the concrete 
procedure for constructing the DF subsystem. 
Now we understand that, in Def.~\ref{Def of DF mode}, it is not necessary 
to additionally require the CCR condition (ii); 
that is, if there exists an uncontrollable and unobservable mode, 
it automatically satisfies the CCR.

Below we have an explicit form of the dynamics of the DF and D modes, 
by actually constructing the linear coordinate transformation from 
$\hat x$ to $\hat x'=(\hat x_{\rm DF}^\top,\hat x_{\rm D}^\top)^\top$, 
where $\hat x_{\rm D}$ is the complement mode to $\hat x_{\rm DF}$. 
First, a vector $u_1$ is taken in such a way that it is orthogonal 
to all the column vectors of $T_1$; 
then $\Sigma_n u_1$ is also orthogonal to all the column vectors of 
$T_1$ and $u_1$. 
Second, we take a vector $u_2$ that is orthogonal to both $u_1$ and 
$\Sigma_n u_1$ in addition to the condition $u_2^\top T_1=0$. 
Repeating this procedure, we have 
$T_2=(u_1, \Sigma_n u_1, \ldots, u_{n-\ell}, \Sigma_n u_{n-\ell})$. 
Then, $T=(T_1, T_2)$ is orthogonal, and it satisfies the symplectic 
condition \eqref{symplectic}; i.e. 
\begin{eqnarray*}
& & \hspace*{0em}
     \left( \begin{array}{c} 
                 T_1^\top \\
                 T_2^\top \\
               \end{array}\right)
     (T_1, T_2)
     =I_{2n},
\\ & & \hspace*{0em}
     \left( \begin{array}{c} 
                 T_1^\top \\
                 T_2^\top \\
               \end{array}\right)
     \Sigma_n
     (T_1, T_2)
     =\left( \begin{array}{cc} 
                \Sigma_{\ell} & O \\
                O & \Sigma_{n-\ell} \\
               \end{array}\right). 
\end{eqnarray*}
By construction, $T_2^\top\hat x$ is complement to the uncontrollable 
and unobservable mode $\hat x_{\rm DF}$, hence we can set 
$\hat x_{\rm D}=T_2^\top\hat x$. 
Therefore, the transformation $\hat x'=T^\top \hat x$ yields the 
explicit form of the dynamics of the DF and D modes as follows; 
noting $T_1^\top\Sigma_n=\Sigma_\ell T_1^\top$ and 
$T_2^\top\Sigma_n=\Sigma_{n-\ell} T_2^\top$, 
we find that \eqref{linear QSDE} is transformed to 
\begin{eqnarray}
& & \hspace*{-3em}
\label{DF dynamics}
    d\left( \begin{array}{c} 
                 \hat x_{\rm DF} \\
                 \hat x_{\rm D} \\
               \end{array}\right)
      =\left( \begin{array}{c} 
                 T_1^\top \\
                 T_2^\top \\
               \end{array}\right) A
           (T_1,~T_2)
       \left( \begin{array}{c} 
                 \hat x_{\rm DF} \\
                 \hat x_{\rm D} \\
               \end{array}\right)dt
\nonumber \\ & & \hspace*{5em}
    \mbox{}
       +\left( \begin{array}{c} 
                 T_1^\top \\
                 T_2^\top \\
               \end{array}\right) 
           \Sigma_n C^\top \Sigma_m d\hat{\cal W}
\nonumber \\ & & \hspace*{2.1em}
    =\left( \begin{array}{cc} 
                 \Sigma_\ell T_1^\top GT_1 & O \\
                 O & T_2^\top A T_2 \\
               \end{array}\right)
       \left( \begin{array}{c} 
                 \hat x_{\rm DF} \\
                 \hat x_{\rm D} \\
               \end{array}\right)dt
\nonumber \\ & & \hspace*{5em}
    \mbox{}
       +\left( \begin{array}{c} 
                 O \\
                 \Sigma_{n-\ell} T_2^\top C^\top \Sigma_m  \\
            \end{array}\right) d\hat{\cal W}, 
\end{eqnarray}
where the relation $CT_1=O$ is used. 
Also, $T_2^\top AT_1=O$ follows from the condition ${\cal O}T_1=O$ 
($\Leftrightarrow CA^kT_1=O,~\forall k\geq 0$), which implies that 
there exists a matrix $P$ satisfying $AT_1=T_1 P$. 
Moreover, the output equation \eqref{linear output} is represented 
as 
\begin{eqnarray}
& & \hspace*{-3em}
\label{DF output}
   d\hat{\cal W}^{\rm out}
    =C(T_1,~T_2)
        \left( \begin{array}{c} 
                 \hat x_{\rm DF} \\
                 \hat x_{\rm D} \\
               \end{array}\right)dt + d\hat{\cal W}
\nonumber \\ & & \hspace*{0.1em}
    =C T_2 \hat x_{\rm D}dt + d\hat{\cal W}. 
\end{eqnarray}
From \eqref{DF dynamics} and \eqref{DF output}, $\hat x_{\rm DF}$ 
is apparently a DF mode that is completely isolated from the 
environmental noise fields. 
In particular, the equation 
$d\hat x_{\rm DF}/dt = \Sigma_\ell T_1^\top G T_1 \hat x_{\rm DF}$ 
means that the DF mode obeys the unitary time evolution governed by 
the Hamiltonian 
\begin{equation}
\label{DF Hamiltonian}
    \hat H_{\rm DF}
      =\hat x_{\rm DF}(0)^\top G_{\rm DF} \hat x_{\rm DF}(0)/2,~~~
    G_{\rm DF}=T_1^\top G T_1. 
\end{equation}
That is, the unitary gate operation 
$\hat U_{\rm DF}(t)={\rm exp}(-it\hat H_{\rm DF})$ can be carried 
out coherently on the DF mode $\hat x_{\rm DF}$.

%%%%%%%%%%%%%%%%%%%%%%%%%%%%%%%%%%%%%%%%%%%%%

\subsection{DF conditions for Hamiltonian engineering}

Lastly in this section, we consider the following practical 
question; 
given a system-environment interaction such that there exists a DF subsystem, 
what kind of system Hamiltonian is allowed in order for the DF subsystem 
to be unchanged? 
More precisely, if the condition \eqref{DF theorem condition} is 
satisfied for a given $C$ and $G=O$, how can we construct a 
non-zero matrix $G$ such that \eqref{DF theorem condition} 
still holds? 
Here a simple characterization of such $G$ is given:

\begin{proposition}
\label{DF mode prop}
Suppose that, for a given $C$, there exists a non-zero matrix 
$T_1$ satisfying 
${\rm Range}(T_1)={\rm Ker}(C) \cap {\rm Ker}(C\Sigma_n)$ 
and $\Sigma_n T_1=T_1 \Sigma_\ell$. 
Then, if ${\rm Range}(T_1)$ is invariant under a linear transformation 
$G$, the system specified by this $G$ and the same $C$ has a DF subsystem 
with mode $T_1^\top \hat x$. 
\end{proposition}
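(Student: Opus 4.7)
The plan is to invoke Theorem~\ref{DF mode theorem} and verify its sufficient condition \eqref{DF theorem condition} for the new system matrix $A=\Sigma_n(G+C^\top\Sigma_m C/2)$ built from the given (nonzero) $G$ and the same $C$. Denote by ${\cal O}$ the observability matrix of this new system. Since by hypothesis $\Sigma_n T_1=T_1\Sigma_\ell$, the subspace ${\rm Range}(T_1)$ is $\Sigma_n$-invariant, so it suffices to show ${\rm Range}(T_1)\subseteq {\rm Ker}({\cal O})$: this immediately gives ${\rm Range}(T_1)\subseteq{\rm Ker}({\cal O}\Sigma_n)$ as well, and hence ${\rm Range}(T_1)\subseteq {\rm Ker}({\cal O})\cap{\rm Ker}({\cal O}\Sigma_n)$, which is all that is needed to conclude via Theorem~\ref{DF mode theorem}.

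The central step is to show that ${\rm Range}(T_1)$ is $A$-invariant. Take any $v\in{\rm Range}(T_1)$. Because ${\rm Range}(T_1)\subseteq{\rm Ker}(C)$ by assumption, the decoherence part of $A$ annihilates $v$, leaving $Av=\Sigma_n G v$. The assumed $G$-invariance of ${\rm Range}(T_1)$ places $Gv$ in ${\rm Range}(T_1)$, and then the identity $\Sigma_n T_1=T_1\Sigma_\ell$ places $\Sigma_n G v$ in ${\rm Range}(T_1)$ as well. Thus $Av\in{\rm Range}(T_1)$, and by induction $A^k v\in{\rm Range}(T_1)$ for every $k\ge 0$.

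Since ${\rm Range}(T_1)\subseteq {\rm Ker}(C)$, the previous step yields $CA^k v=0$ for all $k\ge 0$ and all $v\in{\rm Range}(T_1)$, i.e.\ ${\rm Range}(T_1)\subseteq{\rm Ker}({\cal O})$. Combined with the $\Sigma_n$-invariance of ${\rm Range}(T_1)$ noted above, this gives ${\rm Range}(T_1)\subseteq{\rm Ker}({\cal O})\cap{\rm Ker}({\cal O}\Sigma_n)$, so condition \eqref{DF theorem condition} is met. Theorem~\ref{DF mode theorem} then certifies that the system with matrices $(G,C)$ has a DF subsystem, and because $T_1$ already satisfies $T_1^\top\Sigma_n T_1=\Sigma_\ell$ (it is exactly the $T_1$ constructed in that theorem for $G=O$), the DF mode may be taken as $\hat x_{\rm DF}=T_1^\top\hat x$.

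The only subtle point is in the very first reduction: one should check that changing $G$ away from zero cannot shrink ${\rm Ker}({\cal O})\cap {\rm Ker}({\cal O}\Sigma_n)$ below ${\rm Range}(T_1)$, since this intersection is what Theorem~\ref{DF mode theorem} requires, not the simpler ${\rm Ker}(C)\cap{\rm Ker}(C\Sigma_n)$ appearing in the hypothesis. This is precisely why one must verify $A$-invariance of ${\rm Range}(T_1)$ carefully, and it is also where both assumptions on $T_1$ (the $G$-invariance and the relation $\Sigma_n T_1=T_1\Sigma_\ell$) enter in an essential way; no other complication is expected.
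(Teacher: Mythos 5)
Your proof is correct and takes essentially the same route as the paper: both arguments reduce to showing ${\cal O}T_1={\cal O}\Sigma_n T_1=O$ by combining $CT_1=O$ with the invariance of ${\rm Range}(T_1)$ under $G$ and $\Sigma_n$ (hence under $\Sigma_n G$, hence under $A$), and then invoking Theorem~\ref{DF mode theorem}. The only cosmetic difference is that you establish $A$-invariance of ${\rm Range}(T_1)$ directly, whereas the paper first derives $(\Sigma_n G)T_1=T_1 Q$ and $C(\Sigma_n G)^k T_1=O$ and then passes from $(\Sigma_n G)^k$ to $A^k$ via the calculation in the proof of Lemma~\ref{DF lemma}.
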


\begin{proof} 
The invariance condition means that there exists a matrix $Q$ satisfying 
$GT_1=T_1 \Sigma_\ell^\top Q$, which is equivalent to 
$GT_1=\Sigma_n^\top T_1 Q$ due to the assumption on $T_1$; 
thus $(\Sigma_n G)T_1=T_1 Q$ holds. 
Then, since $CT_1=O$, we have $C(\Sigma_n G)^k T_1=O$ 
for all $k\geq 0$. 
Again from the relation $\Sigma_n T_1=T_1 \Sigma_\ell$, this 
implies $C(\Sigma_n G)^k \Sigma_n T_1=O,~\forall k\geq 0$ 
as well. 
Consequently, it follows from a similar calculation shown in the 
proof of Lemma~\ref{DF lemma} that 
${\cal O}T_1={\cal O}\Sigma_n T_1=0$ is satisfied; 
this means the system has a DF subsystem with DF mode 
$T_1^\top \hat x$. 
\end{proof}

This proposition further leads to a convenient result in 
a particular case:

\begin{corollary}
\label{DF mode corollary}
In addition to the assumptions made in Proposition \ref{DF mode prop}, 
suppose ${\rm Ker}(C) = {\rm Ker}(C\Sigma_n)$. 
Then, the system specified by $G$ and the same $C$ has a DF subsystem 
with mode $T_1^\top \hat x$, if and only if $CGT_1=O$. 
\end{corollary}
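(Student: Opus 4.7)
The plan is to show that, under the extra hypothesis ${\rm Ker}(C) = {\rm Ker}(C\Sigma_n)$, the condition $CGT_1=O$ is precisely the $G$-invariance condition on ${\rm Range}(T_1)$ required by Proposition~\ref{DF mode prop}. The ``if'' direction then follows immediately from that proposition, while the ``only if'' direction is obtained by extracting $CGT_1=O$ from the necessary condition ${\cal O}T_1=O$ together with the structure $A=\Sigma_n(G+C^\top\Sigma_m C/2)$.

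First I would restate the extra hypothesis as the pointwise equivalence $Cu=0 \Leftrightarrow C\Sigma_n u=0$; combined with the standing assumption ${\rm Range}(T_1)={\rm Ker}(C)\cap{\rm Ker}(C\Sigma_n)$, this collapses to ${\rm Range}(T_1)={\rm Ker}(C)$. The crucial reformulation is then the chain
\[
  CGT_1 = O
  \;\Longleftrightarrow\; GT_1 w\in{\rm Ker}(C)={\rm Range}(T_1)~\forall w
  \;\Longleftrightarrow\; {\rm Range}(T_1)~\text{is }G\text{-invariant}.
\]

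For the ``if'' direction, if $CGT_1=O$, the above chain shows that ${\rm Range}(T_1)$ is $G$-invariant, and Proposition~\ref{DF mode prop} directly gives that the system specified by $G$ and $C$ has a DF subsystem with mode $T_1^\top \hat x$. For the ``only if'' direction, suppose such a DF subsystem exists. By Theorem~\ref{DF mode theorem}, the columns of $T_1$ must lie in ${\rm Ker}({\cal O})$, so in particular $CT_1=O$ (which is already guaranteed) and $CAT_1=O$. Substituting $A=\Sigma_n(G + C^\top\Sigma_m C/2)$ and using $CT_1=O$ reduces $CAT_1=O$ to $C\Sigma_n GT_1=O$. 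Applying the pointwise equivalence above column-by-column to $GT_1$ then converts this to $CGT_1=O$.

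The main obstacle, if one wants to call it that, is simply to use the extra hypothesis in both roles: once to identify ${\rm Range}(T_1)$ with ${\rm Ker}(C)$ so that $G$-invariance reads cleanly as $CGT_1=O$, and once to pass from $C\Sigma_n GT_1=O$ back to $CGT_1=O$. Everything else reduces to the explicit form of $A$ and the necessity portion of Theorem~\ref{DF mode theorem}; no higher powers $CA^k T_1$ need to be examined, because the $G$-invariance supplied to Proposition~\ref{DF mode prop} handles all of them at once.
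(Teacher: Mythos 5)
Your proof is correct and follows essentially the same route as the paper: both identify ${\rm Range}(T_1)={\rm Ker}(C)$ via the extra hypothesis and reduce the corollary to the invariance characterization underlying Proposition~\ref{DF mode prop}, with only a cosmetic difference in the necessity step (you convert $C\Sigma_n GT_1=O$ to $CGT_1=O$ using ${\rm Ker}(C\Sigma_n)={\rm Ker}(C)$ directly, while the paper routes through $(\Sigma_n G)T_1=T_1Q$ and the symplectic relation $\Sigma_n T_1=T_1\Sigma_\ell$). Your observation that only the first-order condition $CAT_1=O$ is needed, the higher powers being subsumed by the invariance in the Proposition, is also consistent with the paper's argument.
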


\begin{proof} 
As proven above, $T_1^\top \hat x$ is a DF mode if and only if 
$C(\Sigma_n G)^k T_1=O$ holds for all $k\geq 0$. 
Then, because of ${\rm Range}(T_1)={\rm Ker}(C)$, it is equivalent 
to $(\Sigma_n G)T_1=T_1 Q,~\exists Q$. 
Hence, we further equivalently have $GT_1=T_1\Sigma_\ell^\top Q$ 
and thus $CGT_1=O$. 
\end{proof}

%%%%%%%%%%%%%%%%%%%%%%%%%%%%%%%%%%%%%%%%
%%%%%%%%%%%%%%%%%%%%%%%%%%%%%%%%%%%%%%%%
%%%%%%%%%%%%%%%%%%%%%%%%%%%%%%%%%%%%%%%%

\section{Stability of DF subsystem}

In this section, we assume that the system's initial state is 
Gaussian, hence the state is always Gaussian and is characterized 
by only the mean vector and the covariance matrix, as explained 
in Section~\ref{Preliminary: Gaussian states}. 
Within this framework, we here study a stability property of the general linear 
DF subsystem. 
Note that in the finite dimensional case some useful results 
guaranteeing similar stability properties of a DF subsystem have been 
obtained in \cite{SchirmerPRA2010,Ticozzi2008,Ticozzi2009}.

In order to coherently manipulate a quantum state of the DF subsystem, 
i.e. $\hat \rho_{\rm DF}$, it has to be separated from that of the D 
subsystem with mode $\hat x_{\rm D}$; 
actually, if $\hat x_{\rm DF}$ is quantally correlated (i.e., 
entangled) with $\hat x_{\rm D}$, this means that $\hat \rho_{\rm DF}$ 
can be affected from the environment. 
This requirement for separability of the DF and the D states is, 
in Gaussian case, implied by the condition that the covariance 
matrix $V$ of the whole system is expressed as 
$V={\rm diag}\{V_{\rm DF}, V_{\rm D}\}$, where $V_{\rm DF}$ and 
$V_{\rm D}$ correspond to the covariance matrices of $\hat x_{\rm DF}$ 
and $\hat x_{\rm D}$, respectively. 
Now, from \eqref{mean and variance dynamics}, the covariance 
matrix of the system \eqref{DF dynamics} changes in time with 
the following Lyapunov differential equation:
\begin{equation}
\label{DF Lyapunov}
   \frac{dV}{dt}
    = \left( \begin{array}{cc} 
               A_1 & O \\
               O & A_2 \\
             \end{array}\right)V
     +V\left( \begin{array}{cc} 
               A_1^\top & O \\
               O & A_2^\top \\
             \end{array}\right)
     +\left( \begin{array}{cc} 
               O & O \\
               O & D_2 \\
             \end{array}\right), 
\end{equation}
where $A_1=\Sigma_\ell G_{\rm DF}$, $A_2=T_2^\top AT_2$, and 
$D_2=T_2^\top\Sigma_n^\top C^\top C\Sigma_n T_2/2$. 
Hence, if the solution of \eqref{DF Lyapunov} takes a block diagonal form 
$V(t)={\rm diag}\{V_{\rm DF}(t), V_{\rm D}(t)\}$ in a long time limit 
without respect to the initial value $V(0)$, this means that the DF 
dynamics is robust in the sense that an unwanted correlation between 
the DF and the D modes autonomously decreases and finally vanishes. 
The following theorem provides a convenient criterion for the 
system to have this desirable property (${\rm eig}(A)$ denotes 
any eigenvalue of a matrix $A$).

\begin{theorem}
\label{DF stability}
$\Re[{\rm eig}(\Sigma_\ell G_{\rm DF}) + {\rm eig}(A_2)]<0$ if and 
only if in a long time limit the solution of \eqref{DF Lyapunov} 
takes a form $V(t)={\rm diag}\{V_{\rm DF}(t), V_{\rm D}(t)\}$, where 
$V_{\rm DF}(t)$ and $V_{\rm D}(t)$ correspond to the covariance 
matrices of $\hat x_{\rm DF}(t)$ and $\hat x_{\rm D}(t)$, respectively. 
Moreover, if $G_{\rm DF}\geq 0$, then the above iff condition is 
simplified to $\Re[{\rm eig}(A_2)]<0$. 
\end{theorem}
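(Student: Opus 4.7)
The plan is to exploit the block structure of the drift and diffusion matrices in the Lyapunov equation \eqref{DF Lyapunov} and reduce the whole question to the dynamics of the off-diagonal block of $V(t)$. Write
\[
   V(t) = \begin{pmatrix} V_{11}(t) & V_{12}(t) \\ V_{12}(t)^\top & V_{22}(t) \end{pmatrix},
\]
with the partition conforming to that of $A_{\rm bd}={\rm diag}\{A_1,A_2\}$ and $D_{\rm bd}={\rm diag}\{O,D_2\}$. A direct substitution into \eqref{DF Lyapunov} shows that the diagonal blocks $V_{11}$ and $V_{22}$ satisfy their own decoupled (closed) Lyapunov equations, while $V_{12}(t)$ obeys the Sylvester-type ODE $\dot V_{12} = A_1 V_{12} + V_{12} A_2^\top$, which is \emph{homogeneous} because of the zero in the upper-left block of $D_{\rm bd}$. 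Thus the property ``$V(t)$ tends to block-diagonal form as $t\to\infty$ for every valid initial $V(0)$'' is equivalent to ``$V_{12}(t)\to 0$ for every initial $V_{12}(0)$,'' and the diagonal blocks $V_{11}(t)$, $V_{22}(t)$ are then automatically identified with the covariance matrices $V_{\rm DF}(t)$ and $V_{\rm D}(t)$ because their governing equations coincide with those obtained by applying \eqref{mean and variance dynamics} to the decoupled DF and D subsystems in \eqref{DF dynamics}.

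Next I would solve the Sylvester ODE explicitly as $V_{12}(t)=e^{A_1 t}V_{12}(0)e^{A_2^\top t}$ and recall that the associated Lyapunov (super)operator $\mathcal L: X\mapsto A_1 X + X A_2^\top$ has spectrum $\{\lambda_i(A_1)+\lambda_j(A_2^\top)\}=\{\lambda_i(\Sigma_\ell G_{\rm DF})+\lambda_j(A_2)\}$. Hence $V_{12}(t)\to 0$ for every $V_{12}(0)$ if and only if $\mathcal L$ is Hurwitz, which is exactly the condition $\Re[{\rm eig}(\Sigma_\ell G_{\rm DF})+{\rm eig}(A_2)]<0$. For the ``only if'' direction I need to verify that initial conditions $V_{12}(0)\ne O$ are realized by legitimate quantum covariance matrices; this is easy since, given any Hurwitz $A_2$, the uncertainty constraint $V(0)+i\Sigma_n/2\ge 0$ admits block perturbations with arbitrary small $V_{12}(0)$.

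For the simplification when $G_{\rm DF}\ge 0$, I would prove that $\Sigma_\ell G_{\rm DF}$ has purely imaginary spectrum by a standard quadratic-form argument: if $\Sigma_\ell G_{\rm DF} v = \lambda v$, then multiplying by $v^* G_{\rm DF}$ gives $\lambda\, v^* G_{\rm DF} v = v^* G_{\rm DF}\Sigma_\ell G_{\rm DF} v$; the right-hand side is purely imaginary because $G_{\rm DF}\Sigma_\ell G_{\rm DF}$ is real antisymmetric (using $\Sigma_\ell^\top=-\Sigma_\ell$ and $G_{\rm DF}^\top=G_{\rm DF}$), while the left-hand side is real if $v^* G_{\rm DF} v>0$, forcing $\Re\lambda=0$; the degenerate case $G_{\rm DF}v=0$ gives $\lambda=0$ directly. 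Consequently $\Re[{\rm eig}(\Sigma_\ell G_{\rm DF})]=0$ and the general iff condition collapses to $\Re[{\rm eig}(A_2)]<0$.

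I expect the step that needs the most care is the algebraic identification in the first paragraph—namely, checking that the block $V_{11}(t)$ (resp. $V_{22}(t)$) produced by the decoupled ODE really is the covariance matrix of $\hat x_{\rm DF}(t)$ (resp. $\hat x_{\rm D}(t)$) obtained from the reduced dynamics \eqref{DF dynamics}; however this is essentially automatic because the reduced drift and diffusion matrices read off from \eqref{DF dynamics} coincide with the blocks $(A_1,O)$ and $(A_2,D_2)$. The rest is just spectral bookkeeping for the Sylvester operator and the symplectic-positive-semidefinite observation.
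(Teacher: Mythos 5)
Your proposal is correct and follows essentially the same route as the paper's proof: the same block partition of $V$ reducing the question to the homogeneous Sylvester equation $\dot V_{12}=A_1V_{12}+V_{12}A_2^\top$, the same Kronecker-sum spectral criterion (the paper vectorizes $V_{12}$, you solve $V_{12}(t)=e^{A_1t}V_{12}(0)e^{A_2^\top t}$ directly, which is equivalent bookkeeping), and the identical quadratic-form argument showing $\Re[{\rm eig}(\Sigma_\ell G_{\rm DF})]=0$ when $G_{\rm DF}\geq 0$. Your added check that nonzero $V_{12}(0)$ is realized by legitimate quantum covariance matrices is a small point of extra diligence that the paper leaves implicit, but it does not change the argument.
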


\begin{proof} 
When partitioning the matrix variable $V$ as 
$V=(V_1, V_2 ; V_2^\top, V_3)$, the dynamics of $V_2$ is given by 
\[
    dV_2/dt = A_1V_2 + V_2^\top A_2^\top. 
\]
We now take a vector form of this matrix differential equation. 
For this purpose, let $v_2$ be a collection of the row vectors 
of $V_2$, which is a real $4\ell(n-\ell)$-dimensional row vector. 
Then $v_2$ obeys the dynamics of the form $dv_2/dt=\bar{A}v_2$ 
where $\bar{A}=A_1\otimes I_{2(n-\ell)} + I_{2\ell}\otimes A_2$. 
Now note that any eigenvalue of $\bar{A}$ is equal to 
${\rm eig}(A_1) + {\rm eig}(A_2)$. 
Therefore, we have $v_2\rightarrow 0$ as $t\rightarrow \infty$, 
if and only if the real parts of sum of any eigenvalues of 
$A_1=\Sigma_\ell G_{\rm DF}$ and $A_2$ are strictly negative.

To show the second part of the theorem, let us consider the 
eigen-equation $\Sigma_\ell G_{\rm DF} g=\lambda g$ with $g$ the 
eigenvector and $\lambda$ the corresponding eigenvalue. 
From this we have $g^\dagger (G_{\rm DF}\Sigma_\ell G_{\rm DF})g
=\lambda g^\dagger G_{\rm DF}g$, which immediately leads to 
$(\lambda+\lambda^*)g^\dagger G_{\rm DF}g=0$. 
If $G_{\rm DF}\geq 0$, this can be further expressed as 
$(\lambda+\lambda^*)\|\sqrt{G_{\rm DF}}g\|^2=0$, implying 
$\lambda+\lambda^*=0$ or $G_{\rm DF}g=0$; 
in both cases we have $\Re[{\rm eig}(\Sigma_\ell G_{\rm DF})]=0$. 
\end{proof}

We remark that the specific assumption $G_{\rm DF}> 0$ is significant 
from the viewpoint of state manipulation; 
recall that the unitary gate operation with a quadratic Hamiltonian 
$\hat H=\hat x^\top G\hat x/2$ is given by 
$\hat U(t)={\rm exp}(-it \hat x^\top G\hat x/2)$. 
Then, it was shown in \cite{BurgarthPRL2012} that, if $G> 0$, for 
any $\epsilon>0$ and $\tau_1<0$ there always exists $\tau_2>0$ 
such that $\|\hat U(\tau_1)-\hat U(\tau_2)\|<\epsilon$ with 
$\| \bullet \|$ an appropriate operator norm. 
This result means that a unitary operation requiring propagation 
with negative time can be always realized by a physically realizable 
unitary operation with positive time; 
this condition is indeed necessary to implement a desirable unitary 
gate operation via a sequence of unitaries generated from a set of 
available Hamiltonians.

%%%%%%%%%%%%%%%%%%%%%%%%%%%%%%%%%%%%%%%%
%%%%%%%%%%%%%%%%%%%%%%%%%%%%%%%%%%%%%%%%
%%%%%%%%%%%%%%%%%%%%%%%%%%%%%%%%%%%%%%%%

\section{Coherent state manipulation and preservation on a DF subsystem}

Coherent state manipulation and preservation are certainly 
primary applications of a DF subsystem, as demonstrated extensively 
in the finite dimensional case. 
This section presents two simple examples to show how a DF mode 
can be used in order to achieve these goals in an infinite 
dimensional linear quantum system.

%%%%%%%%%%%%%%%%%%%%%%%%%%%%%%%%%%%%%%%%%%%

\subsection{Particles trapped via dissipative coupling}
\label{DFmodeExample1}

\begin{figure}
\centering
\includegraphics[scale=0.45]{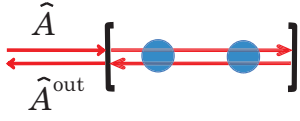}
\caption{
\label{DFmodeExample}
Two trapped particles in a two-sided optical cavity. 
}
\end{figure}

Let us consider a toy system composed of two ``particles" trapped in a two-sided 
optical cavity depicted in Fig.~\ref{DFmodeExample}. 
The coupling operator between the particles and the outer 
environment field is $\hat L=\sqrt{\kappa}(\hat b_1 + \hat b_2)$, 
where $\hat b_i=(\hat q_i + i\hat p_i)/\sqrt{2}$; 
$\hat q_i$ and $\hat p_i$ represent the position and momentum 
operators of the $i$-th particle. 
This dissipation effect described by $\hat L$ stems from the 
coupling of the particles to the single-mode cavity field that 
immediately decays and can be adiabatically eliminated 
\cite{WinelandRMP2003}; 
the parameter $\kappa$ represents the decay rate of this dissipation. 
For more realistic setups, see \cite{Dong2012,Huang2013,Clerk2012}. 
The system dynamics \eqref{linear QSDE} and the output equation 
\eqref{linear output} are now given by 
\[
    d\hat x=A\hat x dt + Bd\hat{\cal W},~~~
    d\hat{\cal W}^{\rm out}=C\hat xdt + d\hat{\cal W},
\]
where
\begin{eqnarray}
& & \hspace*{0em}
    A=\Sigma_2 G 
       - \frac{\kappa}{2}
        \left( \begin{array}{cc} 
                 I_2 & I_2 \\
                 I_2 & I_2 \\
               \end{array}\right),~~~
    B=-\sqrt{\kappa}
        \left( \begin{array}{c} 
                 I_2 \\
                 I_2 \\
               \end{array}\right),~~~
\nonumber \\ & & \hspace*{0em}
    C = \sqrt{\kappa}
           \left( \begin{array}{cc} 
                 I_2 & O  \\
                 O & I_2  \\
               \end{array}\right). 
\nonumber 
\end{eqnarray}

We begin with the assumption that the system does not have its 
own Hamiltonian (i.e., $G=0$) and then try to find a DF subsystem. 
In this case, we have 
\begin{eqnarray}
& & \hspace*{0em}
    {\cal O}=C
      = \sqrt{\kappa}
           \left( \begin{array}{cccc} 
                 1 & 0 & 1 & 0 \\
                 0 & 1 & 0 & 1 \\
               \end{array}\right),
\nonumber \\ & & \hspace*{0em}
    {\cal O}\Sigma_2=C\Sigma_2
      = \sqrt{\kappa}
           \left( \begin{array}{cccc} 
                 0 & 1 & 0 & 1 \\
                 -1 & 0 & -1 & 0 \\
               \end{array}\right). 
\nonumber
\end{eqnarray}
Therefore, $v_1=(1, 0, -1, 0)^\top/\sqrt{2}$ and 
$\Sigma_2 v_1=(0, 1, 0, -1)^\top/\sqrt{2}$ span the kernels 
of both of these matrices; 
i.e., ${\rm span}\{v_1, \Sigma_2 v_1 \}={\rm Ker}({\cal O})
={\rm Ker}({\cal O}\Sigma_2)$. 
Hence, from Theorem \ref{DF mode theorem} the system has 
a DF subsystem. 
In particular, the procedure shown in the proof of 
Theorem~\ref{DF mode theorem} yields $T_1=(v_1, \Sigma_2 v_1)$, 
thus the DF mode is given by 
\[
   \hat x_{\rm DF} = T_1^\top \hat x
       = \frac{1}{\sqrt{2}}
         \left( \begin{array}{c} 
                 \hat q_1 - \hat q_2 \\
                 \hat p_1 - \hat p_2 \\
               \end{array}\right). 
\]

Here let us discuss adding a nonzero matrix $G$ that does not 
change the above-obtained DF subsystem. 
Now ${\rm Ker}({\cal O})={\rm Ker}({\cal O}\Sigma_2)$, hence 
from Corollary~\ref{DF mode corollary}, $G$ is required 
to satisfy $CGT_1=O$. 
Then it is immediate to see that $G$ has to be of the form 
$G=(G_1, G_2 ; G_2, G_1)$ where $G_1$ and $G_2$ are both real 
$2 \times 2$ symmetric matrices. 
Therefore, the DF subsystem is driven by the 
Hamiltonian~\eqref{DF Hamiltonian} 
with $G_{\rm DF}=T_1^\top G T_1=G_1-G_2$. 
This means that, if each particle is governed by the Hamiltonian specified 
by $G_1$ and this Hamiltonian can be arbitrarily chosen, we can perform 
arbitrary linear gate operation on the state of $\hat x_{\rm DF}$. 
Also, when $G_{\rm DF}=G_1-G_2=O$, we can preserve 
any state of the DF subsystem.

Next, to evaluate the stability of the DF subsystem, let us explicitly 
construct the dynamics of both $\hat x_{\rm DF}$ and $\hat x_{\rm D}$, 
according to the procedure shown above \eqref{DF dynamics}. 
To obtain the matrix $T_2$, we chose $u_1=(1, 0, 1, 0)^\top/\sqrt{2}$ 
as a basis vector orthogonal to $v_1$ and $\Sigma_2 v_1$. 
Then $\Sigma_2 u_1=(0, 1, 0, 1)^\top/\sqrt{2}$ is orthogonal to 
$v_1, \Sigma_2 v_1$, and $u_1$, implying $T_2=(u_1, \Sigma_2 u_1)$. 
Hence, the linear coordinate transformation is obtained as 
\begin{equation}
\label{DFmodeExample T}
    T=(T_1, T_2),~~
    T_1=\frac{1}{\sqrt{2}}
        \left( \begin{array}{c} 
                 I_2  \\
                 -I_2 \\
               \end{array}\right),~~
    T_2=\frac{1}{\sqrt{2}}
        \left( \begin{array}{c} 
                 I_2 \\
                 I_2 \\
               \end{array}\right). 
\end{equation}
The transformed variable $\hat x'=T^\top \hat x$ then satisfies
\[
    d\hat x'=A'\hat x' dt + B'd\hat{\cal W},~~~
    d\hat{\cal W}^{\rm out}=C'\hat x'dt + d\hat{\cal W},
\]
where
\begin{eqnarray*}
& & \hspace*{-0.5em}
    A'= \left( \begin{array}{cc} 
                 \Sigma(G_1-G_2) & O \\
                 O & \Sigma(G_1+G_2)-\kappa I_2 \\
               \end{array}\right),
\nonumber \\ & & \hspace*{-0.5em}
    B'=-\sqrt{2\kappa}
        \left( \begin{array}{c} 
                 O \\
                 I_2 \\
               \end{array}\right),~~
    C'=\sqrt{2\kappa}(O,~I_2). 
\end{eqnarray*}
As expected, the first CCR pair of $\hat x'$, i.e., $\hat x_{\rm DF}$, 
is neither affected by the incoming noise field nor 
observed from the output field. 
Now assume that the state is Gaussian. 
Then, by taking large $\kappa$ so that the condition of 
Theorem~\ref{DF stability} is satisfied, we can guarantee the 
stability of the DF subsystem. 
That is, the off-diagonal block matrix of the covariance matrix 
$V'=\mean{\Delta \hat x' \Delta \hat x'\mbox{}^\top 
 + (\Delta \hat x' \Delta \hat x'\mbox{}^\top)^\top}/2$ 
converges to zero as $t\rightarrow \infty$, thus in the long time 
limit the states of DF and D modes are always separated. 
Hence, with this Gaussian system, arbitrary linear gate operation 
and state preservation can be carried out in a robust manner, 
under relatively mild assumptions.

Lastly, under the assumption that the state is Gaussian and the 
condition $G_{\rm DF}=G_1-G_2=O$, let us examine the state preserved 
in the whole system. 
In fact, in this case $\hat x_{\rm DF}$ does not change in time, and hence 
the whole state of $\hat x$ is preserved if $\hat x_{\rm D}$ is in a steady 
state. 
This condition is achieved by again taking large $\kappa$ so 
that the coefficient matrix of the drift term of $\hat x_{\rm D}$ 
is Hurwitz; 
in this case, $\hat x_{\rm D}$ has a steady Gaussian state with 
covariance matrix $I_2/2$. 
As a result, the covariance matrix of the whole state of $\hat x'$ 
is $V'(0)={\rm diag}\{ V_{\rm DF}(0), I_2/2 \}$, where here the 
initial time is reset to $t=0$. 
The covariance matrix of $\hat x$ is then given by 
\begin{eqnarray*}
& & \hspace*{-0.5em}
    V(0)=T V'(0) T^\top 
\nonumber \\ & & \hspace*{1.8em}
       = \frac{1}{2}
               \left( \begin{array}{cc} 
                  I_2/2 + V_{\rm DF}(0) & I_2/2 - V_{\rm DF}(0) \\
                  I_2/2 - V_{\rm DF}(0) & I_2/2 + V_{\rm DF}(0) \\
               \end{array}\right). 
\end{eqnarray*}
If the initial state of $\hat x_{\rm DF}$ can be set to a 
{\it squeezed state} 
\cite{BraunsteinRMP,Furusawa2011,GardinerBook,WallsBook,WisemanBook} 
with covariance matrix 
$V_{\rm DF}(0)={\rm diag}\{ e^r/2, e^{-r}/2 \}$, then $V(0)$ 
takes the form
\begin{equation}
\label{state preservation example}
    V(0)
       = \frac{1}{4}
               \left( \begin{array}{cc|cc} 
                  1 + e^r &  & 1 - e^r &  \\
                   & 1 + e^{-r} &  & 1 - e^{-r} \\ \hline
                  1 - e^r &  & 1 + e^r &  \\
                   & 1 - e^{-r} &  & 1 + e^{-r} \\
               \end{array}\right).
\end{equation}
As mentioned above, this state does not change in time. 
Thus the state preserved is a pure entangled Gaussian state called the 
{\it two-mode squeezed state} \cite{BraunsteinRMP,Furusawa2011}; 
actually, the purity for this state is $P=1/\sqrt{16{\rm det}(V(0))}=1$, 
and the logarithmic negativity \cite{VidalPRA2002}, which is a 
convenient measure for entanglement, is $E_{\cal N}=r/2>0$. 
This result was found in a specific setup \cite{PrauznerJPA2004}, but 
here we have the following general statement; 
in general, preservation of a nontrivial state of a DF subsystem can 
lead to that of an entangled state of the whole system, if the 
stability of the DF mode is guaranteed. 
In other words, a system having a stable DF mode can be utilized 
for preserving an entangled state among the whole system, as well 
as a state of the DF subsystem.

%%%%%%%%%%%%%%%%%%%%%%%%%%%%%%%%%%%%%%%%%%

\subsection{Particles trapped via dispersive coupling}
\label{DFmodeExample2}

The second example shown here deals with the two particles trapped 
in an optical cavity as before, but with different coupling 
$\hat L=\sqrt{\kappa/2}(\hat q_1 + \hat q_2)$, where again $\hat q_i$ 
is the position operator of the $i$-th particle. 
This coupling realizes a continuous-time measurement of 
$\hat q_1 + \hat q_2$; 
see e.g. \cite{DohertyPRA1998,DohertyPRA1999} for how to actually 
construct this kind of position monitoring scheme. 
Now $c=\sqrt{\kappa/2}(1, 0 , 1, 0)$, hence, when $G=0$, we have 
\[
    {\cal O}
      = \sqrt{\kappa}
           \left( \begin{array}{cccc} 
                 1 & 0 & 1 & 0 \\
                 0 & 0 & 0 & 0 \\
               \end{array}\right), ~~
    {\cal O}\Sigma_2
      = \sqrt{\kappa}
           \left( \begin{array}{cccc} 
                 0 & 1 & 0 & 1 \\
                 0 & 0 & 0 & 0 \\
               \end{array}\right), 
\]
and thus
\begin{eqnarray*}
& & \hspace*{0em}
\small{
    {\rm Ker}({\cal O})
     ={\rm span}\Big\{
        \frac{1}{\sqrt{2}}
        \left( \begin{array}{c} 
              1 \\
              0 \\
              -1 \\
              0 \\
          \end{array}\right),
        \left( \begin{array}{c} 
              0 \\
              1 \\
              0 \\
              0 \\
          \end{array}\right),
        \left( \begin{array}{c} 
              0 \\
              0 \\
              0 \\
              1 \\
          \end{array}\right) \Big\},
}
\\ & & \hspace*{0em}
\small{
     {\rm Ker}({\cal O}\Sigma_2)
     ={\rm span}\Big\{
        \left( \begin{array}{c} 
              1 \\
              0 \\
              0 \\
              0 \\
          \end{array}\right),
        \left( \begin{array}{c} 
              0 \\
              0 \\
              1 \\
              0 \\
          \end{array}\right), 
        \frac{1}{\sqrt{2}}
        \left( \begin{array}{c} 
              0 \\
              1 \\
              0 \\
              -1 \\
          \end{array}\right) \Big\}. 
}
\end{eqnarray*}
The intersection of these two spaces is the same as that obtained 
in the previous example, thus we have the same coordinate 
transformation matrix \eqref{DFmodeExample T}. 
However, the dynamical equation of $\hat x'=T^\top \hat x$ differs from 
the previous one: 
\[
    d\hat x'=A'\hat x' dt + B'd\hat{\cal W},~~~
    d\hat{\cal W}^{\rm out}=C'\hat x'dt + d\hat{\cal W},
\]
where
\begin{eqnarray*}
& & \hspace*{-2em}
    A'= \left( \begin{array}{cc} 
                 \Sigma(G_1-G_2) & O \\
                 O & \Sigma(G_1+G_2) \\
               \end{array}\right),
\nonumber \\ & & \hspace*{-2em}
    B'=-\sqrt{2\kappa}
        \left( \begin{array}{c} 
                 O \\
                 E_{22} \\
               \end{array}\right),~~
    C'=\sqrt{2\kappa}(O,~E_{11}). 
\end{eqnarray*}
Here $E_{11}={\rm diag}\{1, 0\}$ and $E_{22}={\rm diag}\{0, 1\}$. 
Note that $\hat q'_2=(\hat q_1+\hat q_2)/\sqrt{2}$ is not 
affected by the noisy environment while it appears in the output 
equation. 
This means that we can extract information about $\hat q'_2$ without 
disturbing it; that is, $\hat q'_2$ is a {\it quantum non-demolition} 
(QND) observable (see e.g. \cite{WisemanBook}).

Let us now discuss the state manipulation and preservation with 
this system. 
As in the previous example, if two particles have the same and 
arbitrary Hamiltonian specified by $G_1$, this means that 
any linear gate operation on the DF mode is possible. 
However, this system does not satisfy the condition of 
Theorem~\ref{DF stability}, implying that the system does not 
have capability of decoupling the DF and the D modes once they 
become correlated. 
In this sense, the state manipulation with this DF mode is 
not robust. 
Next, to see if the system can preserve a state of the whole 
system, let us again set the initial state to the Gaussian state 
with covariance matrix \eqref{state preservation example}. 
A notable difference from the previous case is that now the D mode 
does not have a steady state, and eventually the whole state is 
a time-varying one with covariance matrix 
\[
    V(0) \rightarrow 
    V(t) = V(0) + \frac{1}{4}
               \left( \begin{array}{cc} 
                  \kappa t E_{22} & \kappa t E_{22} \\
                  \kappa t E_{22} & \kappa t E_{22} \\
               \end{array}\right). 
\]
The purity of this Gaussian state is 
$P=1/\sqrt{16{\rm det}(V(t))}=1/\sqrt{1+\kappa t}$ and thus 
converges to zero as $t\rightarrow \infty$. 
That is, any useful state does not remain alive in the long time 
limit. 
As a result, due to the lack of the stability property, this 
system offers neither desirable state manipulation nor 
preservation.

%%%%%%%%%%%%%%%%%%%%%%%%%%%%%%%%%%%%%%%%
%%%%%%%%%%%%%%%%%%%%%%%%%%%%%%%%%%%%%%%%
%%%%%%%%%%%%%%%%%%%%%%%%%%%%%%%%%%%%%%%%

\section{Engineering a DF linear subsystem}
\label{Synthesis of a DF linear system}

This section deals with the topic of how to {\it engineer} a DF 
subsystem, when a certain linear open system is given to us. 
More precisely, the problem is not how to find a DF component in 
that system, but rather how to synthesize an auxiliary system so 
that the extended system has a desirable DF mode.

This engineering problem is important for purely exploring the 
nature of a system of interest, as well as for the application 
to quantum information science such as coherent state manipulation 
and preservation as discussed in the previous section. 
In fact, any quantum system is essentially an open system and 
its properties must be perturbed by surrounding environment. 
Therefore, constructing a subsystem that is exactly isolated from 
environment is of vital importance.

%%%%%%%%%%%%%%%%%%%%%%%%%%%%%%%%%%%%%%%%%%%

\subsection{Opto-mechanical oscillator}

\begin{figure}
\centering
\includegraphics[scale=0.45]{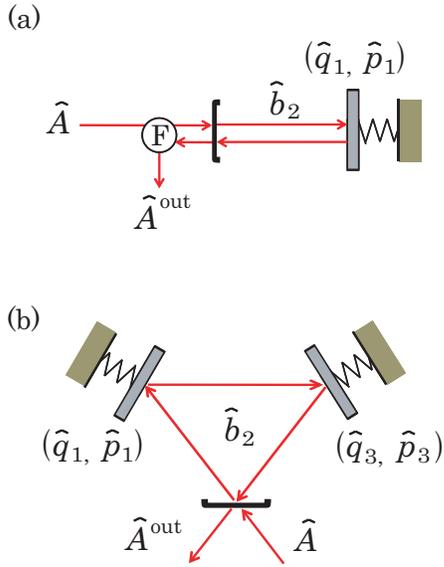}
\caption{
\label{Oscillators}
(a) An opto-mechanical oscillator coupled to an optical cavity field. 
$F$ denotes a Faraday isolator that facilitates one-way coupling 
between the input and cavity fields. 
(b) Two opto-mechanical oscillators coupled to a common optical 
cavity field. 
In both cases, the optical cavity field pushes the oscillator and changes 
its motion through the radiation pressure force. 
}
\end{figure}

The first example is an opto-mechanical oscillator depicted in 
Fig.~\ref{Oscillators}~(a); 
see e.g. \cite{LawPRA1995} for a more detailed description. 
The system is composed of a two-sided optical cavity where one of 
the mirrors is movable and acts as an oscillator. 
Let $\hat q_1$ and $\hat p_1$ be the oscillator's position and 
momentum operators, respectively. 
Also define $\hat q_2=(\hat b_2 + \hat b_2^*)/\sqrt{2}$ and 
$\hat p_2=(\hat b_2 - \hat b_2^*)/\sqrt{2}i$ with $\hat b_2$ the 
annihilation operator of the cavity mode. 
The oscillator is driven by the free Hamiltonian 
$\hat H_0=m\omega^2\hat q_1^2+\hat p_1^2/m$ where $m$ and $\omega$ 
represent the mass and the resonant frequency of the oscillator, 
and further, it is coupled to the cavity mode through the radiation 
pressure Hamiltonian 
$\hat H_{\rm rp}=\epsilon \hspace{0.05cm}\hat q_1 \hat b_2^*\hat b_2$; 
consequently the whole system Hamiltonian is given by 
$\hat H=\hat H_0 + \hat H_{\rm rp}$. 
The cavity mode couples to an outer coherent light field at the 
other mirror in a standard dissipative manner, corresponding to 
$\hat L=\sqrt{\kappa}\hat b_2$. 
As a result, the linearized dynamics of the whole oscillator-cavity 
system is given by
\[
    d\hat x=A\hat x dt + Bd\hat{\cal W},~~~
    d\hat{\cal W}^{\rm out}=C\hat xdt + d\hat{\cal W},
\]
where
\begin{eqnarray*}
& & \hspace*{-3em}
    A=\left( \begin{array}{cc|cc} 
                 0 & 1/m & 0 & 0 \\
                 -m\omega^2 & 0 & \gamma & 0 \\ \hline
                 0 & 0 & -\kappa & 0 \\
                 \gamma & 0 & 0 & -\kappa 
               \end{array}\right),
\nonumber \\ & & \hspace*{-3em}
    B=-\sqrt{2\kappa}
        \left( \begin{array}{cc} 
                 0 & 0 \\
                 0 & 0 \\ \hline
                 1 & 0 \\
                 0 & 1 \\
               \end{array}\right),~~~
    C=-B^\top.
\end{eqnarray*}
Here $\gamma$ denotes the radiation pressure strength proportional 
to $\epsilon$ and $\kappa$ denotes the decay rate of the cavity field.

It is immediate to see that ${\cal O}$ is of full rank, and 
thus the system does not have a DF subsystem. 
Hence, as posed in the beginning of this section, we try to 
engineer an appropriate extended system so that it contains a 
DF subsystem. 
In particular, we follow the idea of Tsang \cite{Tsang2010}; 
an auxiliary system with single mode $(\hat q_3, \hat p_3)$ is 
prepared and directly coupled to the cavity mode through 
the interaction Hamiltonian $\hat H_{\rm int}=-g\hat q_2 \hat q_3$ 
with $g$ the coupling strength. 
Then the extended system of variable 
$\hat x_e=(\hat x^\top, \hat q_3, \hat p_3)^\top$ is given by 
\[
    d\hat x_e=A_e\hat x_e dt + B_e d\hat{\cal W},~~~
    d\hat{\cal W}^{\rm out}=C_e\hat x_e dt + d\hat{\cal W},
\]
where
\begin{eqnarray*}
& & \hspace*{-1em}
       A_e = \left( \begin{array}{c|c} 
                 A &  \begin{array}{cc} 
                         0 & 0 \\ 
                         0 & 0 \\
                         0 & 0 \\
                         g & 0 \\
                       \end{array}   \\ \hline
                 \begin{array}{cccc} 
                    0 & 0 & 0 & 0 \\ 
                    0 & 0 & g & 0 \\
                 \end{array}          & 
                       \begin{array}{cc} 
                         0 & \mu \\ 
                         \nu & 0 \\
                       \end{array}
               \end{array}\right),
\nonumber \\ & & \hspace*{-1em}
        B_e = \left( \begin{array}{c} 
                 B \\ \hline
                 O_2 
              \end{array}\right),~~~
        C_e = -B_e^\top. 
\end{eqnarray*}
Here the auxiliary system is assumed to have the Hamiltonian 
$(-\nu\hat q_3^2+\mu\hat p_3^2)/2$. 
Also $O_2$ denotes the $2\times 2$ zero matrix. 
The problem is to find a set of parameters $(g,\mu,\nu)$ such 
that the above extended system has a DF component 
\footnote{
The problem considered in \cite{Tsang2010} is to find the 
parameter set $(g, \mu, \nu)$ so that the input noise field 
$\hat Q(t)$ does not appear in the output field 
$\hat P^{\rm out}(t)$, for the purpose of enhancing the sensitivity 
to the incoming unknown force acting on the oscillator. 
}.
First, we need that the observability matrix ${\cal O}$ composed 
of $A_e$ and $C_e$ is not of full rank; 
this requirement yields $\mu\nu=-\omega^2$. 
Next, the iff condition given in Theorem~\ref{DF mode theorem} 
imposes us to have $\mu=1/m$. 
As a result, the parameters must satisfy $\mu=1/m$ and $\nu=-m\omega^2$, 
implying that the Hamiltonian of the auxiliary system is exactly 
the same as that of the original oscillator. 
Although we can fabricate this auxiliary system using only some 
optical devices, a natural situation is that another opto-mechanical 
oscillator with the same mass and resonant frequency serves as the 
auxiliary system. 
The resulting extended system is shown in Fig.~\ref{Oscillators}~(b), 
which has been investigated recently in \cite{Huang2013} in the 
context of quantum memory. 
Note that the coupling strength $g$ can be chosen arbitrarily; 
this strikingly differs from the case \cite{Tsang2010} where $g$ 
is also required to satisfy a certain condition.

The dynamical equations of the DF and the D modes are explicitly 
obtained as follows. 
With the parameters determined above, the procedure shown in 
Section~III yields the linear coordinate transformation matrix 
$T=(T_1, T_2)$ as 
\[
     T_1=\frac{1}{\gamma'}
               \left( \begin{array}{c} 
                 g I_2 \\
                 O_2 \\ 
                 -\gamma I_2 \\
               \end{array}\right),~~~
     T_2=\frac{1}{\gamma'}
               \left( \begin{array}{cc} 
                 \gamma I_2 & O_2 \\
                 O_2 & \gamma' I_2 \\ 
                 g I_2 & O_2 \\
               \end{array}\right), 
\]
where $\gamma'=\sqrt{\gamma^2+g^2}$. 
Hence, the DF mode is 
\[
    \hat x_{\rm DF} = T_1^\top \hat x_e
       = \frac{1}{\gamma'}
         \left( \begin{array}{c} 
                 g \hat q_1 - \gamma \hat q_3 \\
                 g \hat p_1 - \gamma \hat p_3 \\
               \end{array}\right), 
\]
and the dynamics of $\hat x_e'=T^\top \hat x_e = 
(\hat x_{\rm DF}^\top, \hat x_{\rm D}^\top)^\top$ is given by
\[
    d\hat x'_e=A'_e\hat x'_e dt + B'_e d\hat{\cal W},~~~
    d\hat{\cal W}^{\rm out}=C'_e\hat x_e dt + d\hat{\cal W},
\]
where
\begin{eqnarray*}
& & \hspace*{-1em}
       A'_e = \left( \begin{array}{c|c} 
                 \begin{array}{cc} 
                         0 & 1/m \\ 
                         -m\omega^2 & 0 \\
                       \end{array}    &    \\ \hline
                  & A 
               \end{array}\right),
\nonumber \\ & & \hspace*{-1em}
        B'_e = \left( \begin{array}{c} 
                 O_2 \\ \hline
                 B
              \end{array}\right),~~~
        C'_e = -B'_e\mbox{}^\top. 
\end{eqnarray*}
Here now the parameter $\gamma$ in $A$ is replaced by 
$\gamma'=\sqrt{\gamma^2+g^2}$. 
This equation shows that the relative coordinate of the oscillators 
is decoupled from the cavity field; 
actually, if the radiation pressure force pushes the oscillators along 
the same direction, then their relative coordinate is independent to 
the optical path length of the cavity, so it should be decoherence free. 
Note that, of course, this can happen only when the two oscillators are 
identical. 
A notable point is that the DF subsystem is governed by the same 
Hamiltonian as that of the original oscillator. 
This means that an ideal closed opto-mechanical oscillator 
can be in fact implemented in a realistic open system.

%%%%%%%%%%%%%%%%%%%%%%%%%%%%%%%%%%%%%%%%%%

\subsection{Trapped particles}

\begin{figure}
\centering
\includegraphics[scale=0.5]{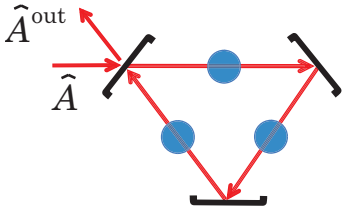}
\caption{
\label{three particles cavity}
Three particles trapped in a ring-type cavity. 
}
\end{figure}

Let us again consider the system of two particles discussed in 
Section~\ref{DFmodeExample1}, which is now additionally subjected 
to the following Hamiltonian: 
\begin{equation}
\label{example original}
   \hat H=\frac{1}{2}\sum_{j=1,2}\big(
              \frac{1}{m}\hat p_j^2 + m\omega^2 \hat q_j^2 \big)
                 +\frac{1}{2}k(\hat q_1-\hat q_2)^2,
\end{equation}
where $m$ and $\omega$ are the mass and the resonant frequency 
of both of the particles. 
The particles interact with each other through a harmonic potential 
with strength $k$, as indicated in the second term of \eqref{example original}. 
The corresponding $G$ matrix is of the form $G=(G_1, G_2 ; G_2, G_1)$, 
thus from the result obtained in Section~\ref{DFmodeExample1} this 
system has a DF component. 
However, this single-mode DF subsystem cannot, of course, simulate 
the behavior of the closed two-mode system driven by the 
Hamiltonian \eqref{example original}. 
Therefore, let us try to engineer an extended system having a 
two-mode DF subsystem in which the closed dynamics driven by the 
Hamiltonian \eqref{example original} is realized. 
That is, we want to simulate the closed system with Hamiltonian 
\eqref{example original}, in an open quantum system.

Let us consider an extended system composed of three particles 
trapped in a ring-type cavity depicted in Fig.~\ref{three particles 
cavity}. 
The Hamiltonian is assumed to be 
\[
   \hat H_e
     =\hat H
       + \frac{1}{2}\big(
        \frac{1}{m}\hat p_3^2 + m\omega'\mbox{}^2 \hat q_3^2 \big)
         +\frac{1}{2}k_2(\hat q_2-\hat q_3)^2
          +\frac{1}{2}k_3(\hat q_3-\hat q_1)^2,
\]
where $k_j$ denotes the coupling strength between the particles 
and $\omega'$ the resonant frequency of the auxiliary particle; 
these parameters will be appropriately chosen later. 
The particles couple to the common single-mode cavity field that 
can be adiabatically eliminated, which as a result leads to the 
system-bath interaction described by 
$\hat L_e=\sqrt{\kappa}(\hat b_1+\hat b_2+\hat b_3)$ with 
$\hat b_i=(\hat q_i + i\hat p_i)/\sqrt{2}$. 
The system matrices in this case are given by 
\begin{eqnarray*}
& & \hspace*{-1em}
    G=\left( \begin{array}{cc} 
               G_q & O \\
               O & I_3 \\
             \end{array}\right),
\nonumber \\ & & \hspace*{-1em}
    G_q=\left( \begin{array}{ccc} 
               \omega^2+k+k_3 & -k & -k_3 \\
               -k & \omega^2+k+k_2 & -k_2 \\
               -k_3 & -k_2 & \omega'\mbox{}^2 +k_2+k_3 \\
             \end{array}\right),
\end{eqnarray*}
and $c=\sqrt{\kappa/2}(1,1,1,i,i,i)$. 
Here, for simple notation, the matrices are represented in the 
basis of 
$\hat x = 
(\hat q_1, \hat q_2, \hat q_3, \hat p_1, \hat p_2, \hat p_3)^\top$; 
this notation is used throughout this subsection. 
Also, we have assumed $m=1$ without loss of generality.

If $G=O$, the linear coordinate transformation matrix 
$T=(T_1, T_2)$ with 
\begin{eqnarray*}
& & \hspace*{-1em}
    T_1=\left( \begin{array}{cc} 
               \tilde{T}_1 & O \\
               O & \tilde{T}_1 \\
             \end{array}\right),~~
    \tilde{T}_1=\left( \begin{array}{cc} 
                 1/\sqrt{2} & 1/\sqrt{6} \\
                 -1/\sqrt{2} & 1/\sqrt{6} \\
                 0 & -2/\sqrt{6} \\
              \end{array}\right),
\nonumber \\ & & \hspace*{-1em}
    T_2=\left( \begin{array}{cc} 
               \tilde{T}_2 & O \\
               O & \tilde{T}_2 \\
             \end{array}\right),~~
    \tilde{T}_2=\left( \begin{array}{c} 
                 1/\sqrt{3} \\
                 1/\sqrt{3} \\
                 1/\sqrt{3} \\
              \end{array}\right)
\end{eqnarray*}
yields the DF mode $\hat x_{\rm DF}=T_1^\top \hat x$ and the D 
mode $\hat x_{\rm D}=T_2^\top \hat x$. 
Here the goal is to find a condition of $G$ so that the above 
$\hat x_{\rm DF}$ is still a DF mode even when the Hamiltonian 
specified by $G$ is added. 
Now Corollary~\ref{DF mode corollary} is applicable, hence $G$ 
must satisfy $CGT_1=0$, and this yields $\omega'=\omega$. 
In this case, it can be actually verified that $\hat x_{\rm DF}$ 
is the DF mode driven by the Hamiltonian 
$\hat H_{\rm DF}=\hat x_{\rm DF}^\top G_{\rm DF} \hat x_{\rm DF}/2$ 
where
\begin{eqnarray*}
& & \hspace*{-1.8em}
    G_{\rm DF}=T_1GT_1^\top
\nonumber \\ & & \hspace*{-1em}
              =\left( \begin{array}{ccc} 
                 \omega^2+2k+(k_2+k_3)/2 & \sqrt{3}(k_3-k_2)/2 \\
                 \sqrt{3}(k_3-k_2)/2 & \omega^2+3(k_2+k_3)/2 \\
               \end{array}\right) \oplus I_2. 
\end{eqnarray*}

We are particularly interested in the question whether the coupled 
particles with Hamiltonian \eqref{example original} can be 
simulated with the isolated DF mode $\hat x_{\rm DF}=T_1^\top \hat x$. 
Then, because the two particles are identical, the condition 
$2k+(k_2+k_3)/2=3(k_2+k_3)/2$ is necessary. 
By choosing the coupling strength as $k_2=\sqrt{3}k$ and 
$k_3=(2-\sqrt{3})k$, we end up with the expression 
\begin{eqnarray*}
& & \hspace*{-1.9em}
{\small
    G_{\rm DF}
}
\nonumber \\ & & \hspace*{-1.7em}
{\small
       =\left( \begin{array}{ccc} 
           \omega^2+\sqrt{3}k + (3-\sqrt{3})k & -(3-\sqrt{3})k \\
           -(3-\sqrt{3})k & \omega^2+\sqrt{3}k + (3-\sqrt{3})k \\
        \end{array}\right) \oplus I_2. 
}
\end{eqnarray*}
The corresponding Hamiltonian driving the DF mode is thus given by 
\[
    \hat H_{\rm DF}
     =\frac{1}{2}\sum_{j=1,2}\big[
              \hat p_j'\mbox{}^2 
                + (\omega^2+\sqrt{3}k) \hat q_j'\mbox{}^2 \big]
      +\frac{1}{2}(3-\sqrt{3})k(\hat q'_1-\hat q'_2)^2,
\]
where now $\hat x_{\rm DF}=(\hat q'_1, \hat q'_2, \hat p'_1, 
\hat p'_2)^\top$. 
Recall that $m=1$ is assumed. 
We can further prove that $G_{\rm DF}>0$ and also that 
$A_2=T_2^\top AT_2$ is Hurwitz. 
Therefore, from Theorem~\ref{DF stability}, the DF mode 
with the above Hamiltonian is stable. 
Summarizing, we have constructed a stable DF subsystem of two-mode 
coupled particles whose Hamiltonian has the same structure as that 
of the Hamiltonian of the original open system.

%%%%%%%%%%%%%%%%%%%%%%%%%%%%%%%%%%%%%%%%
%%%%%%%%%%%%%%%%%%%%%%%%%%%%%%%%%%%%%%%%
%%%%%%%%%%%%%%%%%%%%%%%%%%%%%%%%%%%%%%%%

\section{Conclusion}

In this paper, we have developed a basic theory of general linear DF 
subsystems, shown some applications, and demonstrated how to actually 
construct a DF subsystem. 
The theory contains a general iff condition for an open linear 
system to have a closed DF subsystem, which fully utilizes the notions 
of controllability and observability in control theory. 
Although, as expected from the literature in the finite dimensional 
case, there should be a number of applications of linear DF subsystems, 
in this paper we have studied only very basic coherent manipulation 
and preservation of the state of a single-mode DF subsystem. 
What was additionally focused on is the idea of synthesis of an open 
linear quantum system having a desirable DF component, to emphasize 
the usefulness of the general results obtained in the former part 
of the paper. 
Several applications of linear DF subsystems theory will be 
presented elsewhere; application to quantum memory is one such 
example \cite{Yamamoto2014}.

%%%%%%%%%%%%%%%%%%%%%%%%%%%%%%%%%%%%%%%%
%%%%%%%%%%%%%%%%%%%%%%%%%%%%%%%%%%%%%%%%
%%%%%%%%%%%%%%%%%%%%%%%%%%%%%%%%%%%%%%%%

\section*{Acknowledgment}

The author thanks M. R. James and I. R. Petersen for their 
insightful comments. 
The author also acknowledges helpful discussions with K. Koga 
and M. Ohnuki.

%%%%%%%%%%%%%%%%%%%%%%%%%%%%%%%%%%%%%%%%
%%%%%%%%%%%%%%%%%%%%%%%%%%%%%%%%%%%%%%%%
%%%%%%%%%%%%%%%%%%%%%%%%%%%%%%%%%%%%%%%%

\section*{Appendix}

This appendix gives some fundamentals of quantum mechanics. 
In quantum mechanics, any physical quantity takes a random value 
when measuring it, hence it is essentially a random variable and 
is called an {\it observable}. 
A biggest difference between quantum observables and classical 
(i.e., non-quantum) random variables is that in general the 
former cannot be measured simultaneously; 
mathematically, this fact is represented by the non-commutativity 
of two observables $\hat A$ and $\hat B$, i.e., 
$[\hat A, \hat B]=\hat A \hat B - \hat B \hat A \neq 0$. 
This implies that an observable should be expressed by a 
matrix (for a finite dimensional system) or an operator (for an 
infinite dimensional system), which is further required to be 
Hermitian for the realization value to be real. 
Accordingly, in quantum mechanics, a probabilistic distribution 
is also represented by a matrix or an operator, and it is called 
a {\it state}. 
A state $\hat \rho$ has to satisfy $\hat \rho=\hat \rho^*\geq 0$ 
and $\Tr\hat \rho=1$. 
When a system of interest is in a state $\hat \rho$ and we measure 
an observable $\hat A$, then the measurement result is randomly 
distributed with mean $\mean{\hat A}=\Tr(\hat A \hat \rho)$.

In an ideal situation, the equation of motion of an observable 
$\hat A(t)$, which is called the {\it Heisenberg equation}, is 
given by $d\hat A(t)/dt=i[\hat H, \hat A(t)]$, where $\hat H$ is 
an observable called {\it Hamiltonian}. 
(For simplicity, $\hat H$ is now assumed to be time-independent.) 
The solution of this equation is 
$\hat A(t) =\hat U^*(t) \hat A(0) \hat U(t)$ with 
$\hat U(t)={\rm e}^{-i\hat H t}$ unitary, hence it is a unitary 
evolution of $\hat A(0)$. 
The mean value of the measurement results of $\hat A(t)$ at time $t$ 
is given by $\mean{\hat A}(t)=\Tr(\hat A(t) \hat \rho)$. 
Note that the same statistics is obtained if we take a picture 
that the observable is fixed to $\hat A=\hat A(0)$ and the state 
evolves in time as $d\hat \rho(t)/dt=-i[\hat H, \hat \rho(t)]$; 
actually we then have $\mean{\hat A}(t)=\Tr(\hat A(t) \hat \rho)
=\Tr(\hat A \hat \rho(t))$. 
This is called the {\it Schr\"{o}dinger picture}, while in the 
{\it Heisenberg picture} we fix the state and deal with an 
observable as a dynamical one.

%%%%%%%%%%%%%%%%%%%%%%%%%%%%%%%%%%%%%%%%
%%%%%%%%%%%%%%%%%%%%%%%%%%%%%%%%%%%%%%%%
%%%%%%%%%%%%%%%%%%%%%%%%%%%%%%%%%%%%%%%%

%%%%%%%%%%%%%%%%%%%%%%%%%%%%%%%%%%%%%%%%%%
%%%%%%%%%%%%%%%%%% biography %%%%%%%%%%%%%%%%%%
%%%%%%%%%%%%%%%%%%%%%%%%%%%%%%%%%%%%%%%%%%

%\begin{IEEEbiography}{Naoki Yamamoto} 
%received the B.S. in engineering, M.S., and Ph.D. degrees in Information 
%Physics and Computing from University of Tokyo, Japan, in 1999, 2001, 
%and 2004, respectively. 

%From 2003 to 2007 he was a research fellow of the Japan Society 
%for the Promotion of Science. 
%From 2004 to 2006 he was a visiting researcher at California 
%Institute of Technology, USA, and from 2007 to 2008 he was 
%a research fellow at the Australian National University, 
%Australia. 
%He is currently an Associate Professor at Keio University, 
%Yokohama, Japan. 

%His current research interests are in quantum engineering including 
%control, system identification, and filtering, in the quantum information 
%context. 
%\end{IEEEbiography}

\end{document}